\theoremstyle{plain}
\newtheorem{theorem}{Theorem}[section]
\newtheorem{proposition}[theorem]{Proposition}
\newtheorem{lemma}[theorem]{Lemma}
\newtheorem{corollary}[theorem]{Corollary}
\newtheorem{remark}[theorem]{Remark}
\def\eqnn{\begin{eqnarray*}}
\def\eeqnn{\end{eqnarray*}}
\def\eqn{\begin{eqnarray}}
\def\eeqn{\end{eqnarray}}
\def\bal{\begin{align}}
\def\eal{\end{align}}
\def\prf{\begin{proof}}
\def\endprf{\end{proof}}
\def\R{{\mathbb R}}
\newcommand{\nc}{\newcommand}
\nc{\lam}{\lambda}
\nc{\G}{\Gamma}
\nc{\g}{\gamma}
\nc{\al}{\alpha}
\nc{\del}{\delta}
\nc{\om}{\omega}
\nc{\Om}{\Omega}
\nc{\Omt}{\tilde{\Omega}}
\nc{\ta}{\tau}
\nc{\w}{\omega}
\nc{\io}{\iota}
\nc{\h}{\theta}
\nc{\z}{\zeta}
\nc{\s}{\sigma}
\nc{\Si}{\Sigma}
\nc{\Lam}{\Lambda}
\def\cH{{\mathcal H}}
\def\cM{{\mathcal M}}
\def\cS{{\mathcal S}}
\def\e{\epsilon}
\def\g{\sqrt\alpha}
\def\supp{{\rm supp}}
\def\n{\nabla}
\def\Fo{\mathfrak{F}}
\def\1{{\bf 1}}
\def\id{\chi}
\def\Aop{\psi}
\def\cSd{\cS_\mu}
\def\Dop{\phi}
\def\Jop{{\mathcal J}_\sigma}
\def\JopN{{\mathcal J}_0}
\def\A{A}
\def\Af{A_\sigma}
\def\Eeff{E_{{\rm eff},\sigma}}
\def\Eg{E_{\sigma}}
\def\He{H^{V}_{\sigma}}
\def\Heff{H_{{\rm eff},\sigma}}
\def\Hef{H_{\rm eff}}
\def\Hn{H_{\sigma}}
\def\HV{H^{V}}
\def\h{H}
\def\Kn{K_\sigma}
\def\Ve{V_\e}
\def\Pf{P_f}
\def\Ptot{P_{tot}}
\def\holder{\theta}
\def\Holdexp{\holder }
\def\nablE{\nabla E_\sigma}
\def\Phsig{\Phi_{\sigma}}
\def\PhiN{\Phi}
\def\Psig{\Psi_{\sigma}}
\def\spec{{\rm spec}}
\newcommand{\DETAILS}[1]{}
\numberwithin{equation}{section}
\begin{document}

\parskip=8pt

\title[Effective dynamics in non-relativistic QED]
{Effective dynamics of an electron coupled to an external potential
in non-relativistic QED}

\author[V. Bach]{Volker Bach}
\address[V. Bach]{Institut fuer Analysis und Algebra
Carl-Friedrich-Gauss-Fakultaet,
Technische Universitaet Braunschweig,
38106 Braunschweig,
Germany }
\author[T. Chen]{Thomas Chen}
\address[T. Chen]{Department of Mathematics, University of Texas at Austin, Austin TX 78712, USA}
\email{tc@math.utexas.edu}
\author[J. Faupin]{J\'er\'emy Faupin}
\address[J. Faupin]{Institut de Math{\'e}matiques de Bordeaux \\
UMR-CNRS 5251, Universit{\'e} de Bordeaux 1 \\
351 cours de la lib{\'e}ration, 33405 Talence Cedex, France}
\email{jeremy.faupin@math.u-bordeaux1.fr}
\author[J. Fr\"ohlich]{J\"urg Fr\"ohlich}
\address[J. Fr{\"o}hlich]{Institut f{\"u}r Theoretische Physik, ETH H{\"o}nggerberg, CH-8093 Z{\"u}rich, Switzerland}
\address{Present address: School of Mathematics, The Institute for
Advanced Study, Princeton, NJ 08540; visit supported by
'The Fund For Math' and 'The Monell Foundation'.}
\email{juerg@phys.ethz.ch}
\author[I.M. Sigal]{Israel Michael Sigal}
\address[I.M. Sigal]{Department of Mathematics, University of Toronto, Toronto, ON M5S 2E4, Canada}
\email{im.sigal@utoronto.ca}


\begin{abstract}
In the framework of non-relativistic QED,  we show that the renormalized mass of the electron (after having taken into account radiative corrections)  appears as the kinematic mass in its response to an external potential force. Specifically, we study the dynamics of an electron in a slowly varying external potential and
with slowly varying initial conditions
and prove that, for a long time, it is accurately described by an associated effective dynamics of a
Schr\"odinger electron in the same external potential
and  for the same initial data,  with a kinetic energy operator
determined by the renormalized dispersion law of the translation-invariant QED model.
\end{abstract}

\maketitle

\begin{center}
{\em 
This paper is dedicated to the memory of Walter Hunziker - teacher and friend.}
\end{center}

\section{Introduction}

In this paper we show that the renormalized mass of the electron, taking into account radiative corrections due to its interaction with the quantized electromagnetic field, and the kinematic mass appearing in its response to a slowly varying external potential force are identical. Our analysis is carried out within the standard framework of  non-relativistic quantum electrodynamics (QED).  The  renormalized electron mass, $m_{\rm ren}$, is defined as the inverse curvature at zero momentum of the energy (dispersion law), $E(p)$, of a dressed electron as a function of its momentum $p$ (no external potentials are present),  i.e.,  $m_{\rm ren}= E''(0)^{-1}$, while the kinematic mass of the electron enters the (effective) dynamical equations when it moves under the influence of an external potential force. 

Our starting point is the dynamics generated by the Hamiltonian, $\HV$, describing a non-relativistic  electron interacting with the quantized
electromagnetic field and moving under the influence of a slowly varying potential, $V_\e$.  We consider the time evolution of dressed one-electron states parametrized by
wave functions $u_0^\e\in H^1(\R^3)$, with $\|u_0^\e\|_{L^2}=1$ and 
$\|\nabla u_0^\e\|_{L^2}\le  \e^\kappa$, with $0\leq\kappa<\frac13$, 
and prove that their evolution is accurately
approximated, during a long interval of time, by an effective Schr\"odinger dynamics generated by  the one-particle Schr\"odinger operator
\eqn\label{Heff-def-1-0}
	\Hef \, := \, E(-i\nabla_x)  \, + \, \Ve(x) \,,
\eeqn with kinetic energy given by the dispersion law $E(p)$.
This result is in line with the general idea that any kind of physical dynamics is an effective dynamics that can ultimately be derived from a more fundamental theory. 
While results of a similar nature have been proven for quantum-mechanical particles interacting with $massive$ bosons, \cite{spte}, 
ours is the first result covering  electrons interacting with photons (or, more generally, 
{\it massless} bosons) and revealing effects of radiative corrections to the electron mass.
Our derivation relies in an essential way
on recent regularity results on the mass shell, i.e., the
ground state energy and the corresponding ground state vector
as a function of total momentum \cite{cfp1,cfp2}. 
An interesting result on the effective dynamics of two heavy particles interacting via exchange of  massless bosons has previously been obtained in \cite{tt}.  

In the usual model of non-relativistic QED, the Hilbert space of states of a system consisting of a single electron and arbitrarily many photons (described in the Coulomb gauge) is given by
\eqn
	\cH \, := \, L^2(\R^3) \, \otimes \, \Fo \,,
\eeqn
where $L^2(\R^3)$ is the Hilbert space of square-integrable wave functions describing the electron degrees of freedom, (electron spin is neglected for notational convenience).
The space $\Fo$ is the Fock space of physical states of photons,
\begin{equation*}
	\Fo  :=  \bigoplus_{n\geq0} \Fo_n.
\end{equation*}
Here
$\Fo_n \, := \, {\rm Sym} ( \, L^2(\R^3 \, \times \, \{+,-\} \, ) \, )^{\otimes n}$
denotes the physical Hilbert space of states of $n$ photons.
The Hamiltonian acting on the space $\mathcal{H}$ is given by the expression
\eqn\label{He-def-0}
	\HV \, := \, \h \, + \, \Ve \otimes\1_f \, ,
\eeqn
where $H$ is the generator of the dynamics of a single, freely moving
non-relativistic electron minimally coupled to
the quantized electromagnetic  field, i.e.,
\eqn\label{Hn-def-0}
	\h  \, := \, \frac12( \,- i\nabla_x \otimes \1_f \, + \, \g \A(x) \, )^2
	\, + \, \1_{el} \otimes H_f \, ,
\eeqn
and where $\Ve(x):=V(\e x)$ is a slowly varying potential, with $\e>0$ small;
its precise properties are formulated in Theorem \ref{thm-main-1-0}  below.
Furthermore,
\eqn\label{Af-def-0-1}
	\A(x) \, := \,
	\sum_{\lambda}\int_{|k| \le 1}
	\, \frac{dk}{|k|^{1/2}} \,
	\{ \,  \e_\lambda(k) \,  e^{ikx} \otimes a_\lambda(k) \, + \, h.c. \, \}
\eeqn
denotes the quantized electromagnetic vector potential in the Coulomb gauge
with an ultraviolet cutoff imposed, $ |k|\leq 1$, and
\eqn\label{Hf-def}
	H_f \, := \, \sum_{\lambda} \int dk \, |k| \,  a_\lambda^*(k) \,  a_\lambda(k)
\eeqn
is the photon Hamiltonian.
In Eqs. \eqref{Af-def-0-1} and \eqref{Hf-def}, $a_\lambda^*(k)$, $a_\lambda(k)$ are
the usual photon creation- and annihilation operators, $\lambda=\pm$
indicates photon helicity, and $\e_\lambda(k)$ is a polarization vector perpendicular
to $k$ corresponding to helicity $\lambda$.
We note that all results in this paper hold for sufficiently small values of the fine structure
constant, $0<\alpha\ll1$.

We observe that the Hamiltonian $\h$ is translation-invariant, in the sense that $H$ commutes with translations, $T_{y}:  \Psi(x)\to  e^{ iy\cdot P_f} \Psi(x+y)$, for $y\in\R^3$,
where  $\Pf:=\sum_\lambda\int dk \, k \, a_\lambda^*(k)a_\lambda(k)$ is the momentum operator of the quantized radiation field. Hence $H$ commutes with the total momentum operator 
\eqn\label{Ptot}
	\Ptot \, := \, -i\nabla_x\otimes\1_f \, + \, \1_{el}\otimes \Pf,
\eeqn
of the electron and the photon field: $[\h, \Ptot]=0$. It follows that $H$ can be decomposed as a direct integral
\begin{equation}\label{H-fib-deco}
U\h U^{-1} = \int_{\mathbb{R}^3}^{\oplus} \h(p) dp,
\end{equation}
of fiber operators, $H(p)$, over the spectrum of $\Ptot$,  where $H(p)$ is defined on the fiber space $\mathcal{H}_{p}\cong\Fo$ in the direct integral decomposition, $\cH\cong\int^{\oplus}_{\mathbb{R}^3}dp \, \cH_p$, of $\mathcal{H}$. The operator $U : \cH \to \int^{\oplus}dp \, \cH_p$ is a generalized Fourier transform defined on smooth, rapidly decaying functions,
\begin{equation}\label{eq:fourier1} 
		(U \Psi)(p) :=( Fe^{i P_f\cdot x}\Psi)(p) = 
		(2\pi)^{-3/2}\int_{\mathbb{R}^3}e^{-i(p-P_f)\cdot x}\Psi(x) dx, 
\end{equation}
where $F$ is the standard Fourier transform for Hilbert space-valued functions, 
\begin{equation*}
(F\Psi)(p)= (2\pi)^{-3/2}\int_{\mathbb{R}^3}e^{-i p\cdot x}\Psi(x) dx. 
\end{equation*}
For  smooth, rapidly decaying vector-valued functions $\Phi (p) \in \cH$, its inverse is given by
 \begin{equation}\label{eq:fourier2} 
 (U^{-1}\Phi)(x) :=e^{-i P_f\cdot x}(F^{-1}\Phi)(x)
 =(2\pi)^{-3/2}\int_{\mathbb{R}^3} e^{ix\cdot(p-P_f)}\Phi (p) dp.
\end{equation}
We note that
\eqn \label{H-fib-deco'} 
	(U\h \Psi)(p) \, = \, \h(p)  (U\Psi)(p)
	\; \; \; , \; \; \; \; 	
	(U {\Ptot\psi})(p)  \, = \,p \,  
	(U\psi)(p) \,.
\eeqn
Since $U$ is the composition of two unitary operators, $ e^{i P_f\cdot x}$ and the standard Fourier transform $F$, it is unitary, too, and Eq. \eqref{eq:fourier2}  defines its inverse.

We define  creation- and annihilation operators, $b_{\lambda}^{*}(k)$ and $b_{\lambda}(k)$, on the fiber spaces $\cH_p$ by
\eqn\label{eq-bbstar-def-1}
	b_{\lambda}(k) \, :=   U e^{ ikx} a_\lambda(k) U^{-1} 
	\; \; \; , \; \; \; \;
	b_{\lambda}^*(k) \, := \,  U e^{- ikx}  a_\lambda^*(k) U^{-1}\,, 
\eeqn 
i.e.,
\eqn
	(Ue^{ ikx}a_{\lambda}(k) \Psi)( p) \, =\, b_{\lambda}(k)(U\Psi)(p )
	\; \; \; , \; \; \; \;
	(Ue^{-ikx}a_{\lambda}^*(k) \Psi)( p) \, =\, b_{\lambda}^*(k)(U\Psi)(p ) \,,
\eeqn 
for $\Psi \in \mathcal{H}$. Obviously, the operator-valued distributions $b_{\lambda}(k)$ and $b_{\lambda}^{*}(k)$ commute with $P_{tot}$. Thus, the operators
$b_\lambda^{(*)}(f):=\int  b_\lambda^{(*)}(k) \, \widehat f(k) dk $ map the fiber spaces $\cH_p$ 
to themselves, for any test function $f$.
The fact that these operators satisfy the usual canonical commutation relations is obvious. The Fock space constructed from the operators $b_\lambda^{(*)}(f)$, $f \in L^2(\R^3 \, \times \, \{+,-\} \, )$, and the vacuum vector $\Omega$ is denoted by $\Fo^{b}$.

From abstract theory, 
the  fiber operators $\h(p)$, $p \in \mathbb{R}^3$, are nonnegative self-adjoint operators acting on $\cH_p\cong\Fo^b$.  
Their explicit form is determined  in the next section.
We define $E(p)=\inf\spec \h(p)$, for all $p \in \mathbb{R}^3$, and
\eqn
	\cS \, :=  \, \big\{ \, p \in \mathbb{R}^3 \, \big| \, |p| \le \frac13 \, \big\}.
\eeqn
Making use of approximate ground states, $\PhiN^\rho(p)$, $\rho > 0$, 
(dressed by a cloud of soft photons with frequencies below $\rho$) of the operators $\h(p)$, which will be defined in \eqref{eq-Phsig-def-0-1}, we introduce
a family of maps  $\JopN^\rho: L^2(\R^3)\mapsto \cH$,
from the space $L^2(\R^3)$ of square-integrable one-particle wave functions, $u$,
to  a subspace of {\em dressed one-electron states},
$\widehat u\, \PhiN^\rho$, as  
\eqn\label{eq-JopN-def-1}
	\JopN^\rho( \, u \, ) (x)\, :=  (U^{-1}\, \id_{\cSd}	\, \widehat u\, \PhiN^\rho)(x)\, =
	(2\pi)^{-3/2}\int_{} dp \, \widehat u(p) \, e^{ix(p-\Pf)} \, \id_{\cSd}(p)
	\, \PhiN^\rho(p)  \,,
\eeqn
where $\id_{\cSd}$ is a smooth approximate characteristic function of the set $S_\mu := (1-\mu)\cS\subset\cS \subset \mathbb{R}^3$, ($0<\mu<1$).

In this paper we study the time evolution of  one-electron states, 
$\JopN^\rho (u_0^\e)$,
where $u_{0}^{\epsilon}$ is a slowly varying one-particle wave function,
dressed by an {\em infrared cloud} of photons with frequencies  below $\rho$.
More precisely, we study solutions of the Schr{\"o}dinger equation
\eqn\label{fullSE}
	i\partial_t \Psi(t) \, = \, \HV \, \Psi(t)
     	\; , \; \; \; \; \text{with} \; \; 	\Psi(0) \, = \, \JopN^\rho (u_0^\e)  \, .
\eeqn
The key idea is to relate the solution
$\Psi(t) \, = \, e^{-it \HV}\JopN^\rho (u_0^\e)$ of this Schr\"odinger equation to the solution of the
Schr\"odinger equation
\eqn\label{effSchrodinger-1-0}
i \partial_t  u_t^\e 
\, = \, 
\Hef \, u_t^\e  \; , \;  \; 
	\text{with} \quad 
	u_{t=0}^\e = u_0^\e \ ,
\eeqn
corresponding to the one-particle Schr\"odinger operator \eqref{Heff-def-1-0},
where we recall that $\Hef = E(-i\nabla_x)  + \Ve(x)$,
with $E(p)$ as defined above. 
We consider the comparison state
\eqn\label{effdyn-def-0}
	\JopN^\rho( u_t^\e )  \; \in \, \cH \,,
\eeqn
where $ u_t^\e 
:=  e^{-it \Hef}  u_0^\e$
is the solution of \eqref{effSchrodinger-1-0}, and
show that $\Psi(t)$ remains close to $\JopN^\rho( u_t^\e )$, for a long time.
The choice of initial data satisfying 
\eqn\label{u0}
	\| u_0^\e \|_{L^2(\R^3)} =1\ \quad \mbox{and}\ \quad\|\nabla u_0^\e \|_{L^2(\R^3)} \leq \e^\kappa
	\;\; , \; \;  0 \leq \kappa< \frac13 \,,
\eeqn guarantees that
$\widehat u_t^\e $ 
remains concentrated in $\cS$ during the time scales relevant for this problem,
provided the support of $\widehat u_0^\e$ is contained in $\cS$.

\begin{theorem}\label{thm-main-1-0}
Let $0< \e < 1/3$, $0 \le \kappa <1/3$ and assume that $u_0^\e \in L^2(\R^3)$ obeys \eqref{u0}. 
Assume, furthermore, that $V \in L^\infty(\R^3; \R)$ is such that $\widehat V \in
L^1(\R^3)$ and that $\widehat V$ is supported in the unit ball,
\eqn\label{eq-Vpot-hyp-0}
	\supp(\widehat V) \, \subset \, \big\{ \, k \in \mathbb{R}^3 \, | \,  |k| \le 1 \, \big\} \, .
\eeqn
 Let $0 < \delta < 2(\frac13 - \kappa)$, and choose $\rho = \rho_\e : = \e^{\frac23-\delta}$.\\ 
 Then there exists 
 $0<\alpha_\delta\ll1$
such that, for all $0 \le \alpha \le \alpha_\delta$,
the bound
\eqn\label{eq-main-thm-difference-0-1}
	\| \, e^{-it H^V} \,\JopN^{\rho_\e} (\,  u_0^\e \, )
	\, - \,\JopN^{\rho_\e} (\,  e^{-itH_{{\rm eff}}} \, u_0^\e \, ) \, \|_{\cH}
	\ \leq \  
        C_\delta \, \big( \,         
        \e^{\frac{1}{3} - \frac{\delta}{2} + \kappa} \: t 
            \: + \: \e^{\frac{4}{3} - \frac{\delta}{2}} \: t^2 
	\, \big) \, ,
\eeqn
holds for all times $t \geq 0$.
In particular, for all $0 \leq t \leq \e^{-2/3}$, we have that
\eqn\label{eq-main-thm-difference-0-1,1}
	\| \, e^{-it H^V} \,\JopN^{\rho_\e} (\,  u_0^\e \, )
	\, - \,\JopN^{\rho_\e} (\,  e^{-itH_{{\rm eff}}} \, u_0^\e \, ) \, \|_{\cH}
	\ \leq \  
        C_\delta \; \e^{\frac{1}{3} - \frac{\delta}{2} + \kappa} \; t \, .
\eeqn
\end{theorem}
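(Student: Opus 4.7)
I would argue by Duhamel, comparing $\Psi_t:=e^{-it\HV}\JopN^{\rho_\e}(u_0^\e)$ with the ansatz $\Phi_t:=\JopN^{\rho_\e}(u_t^\e)$, where $u_t^\e=e^{-it\Hef}u_0^\e$. Since $\Psi_0=\Phi_0$ and $i\partial_t\Psi_t=\HV\Psi_t$, one has
\begin{equation*}
\Psi_t-\Phi_t \ = \ -i\int_0^t e^{-i(t-s)\HV}\bigl[\HV\JopN^{\rho_\e}(u_s^\e)-\JopN^{\rho_\e}(\Hef\,u_s^\e)\bigr]\,ds,
\end{equation*}
so by unitarity the task reduces to an $s$-pointwise bound on the intertwining error. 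Writing $\HV=\h+\Ve\otimes\1_f$ and $\Hef=E(-i\nabla_x)+\Ve$, I split the error into a kinetic piece $\h\JopN^{\rho_\e}(u)-\JopN^{\rho_\e}(E(-i\nabla_x)u)$ and a potential commutator $[\Ve,\JopN^{\rho_\e}]u$; the latter makes sense because $\Ve$ acts trivially on $\Fo$.

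\textbf{Kinetic piece via fiber decomposition.} In the $U$-representation, $U\JopN^{\rho_\e}(u)(p)=\id_{\cSd}(p)\widehat{u}(p)\PhiN^{\rho_\e}(p)$ and $\h$ becomes multiplication by $\h(p)$, so the kinetic intertwining error equals $\id_{\cSd}(p)\widehat{u}(p)(\h(p)-E(p))\PhiN^{\rho_\e}(p)$. The approximate ground state $\PhiN^{\rho_\e}(p)$ is constructed by dressing below the infrared scale $\rho_\e$, and I would invoke the mass-shell regularity results of \cite{cfp1,cfp2} to derive a bound of the form $\|(\h(p)-E(p))\PhiN^{\rho_\e}(p)\|_\Fo\le C|p|\rho_\e^{1/2}$ uniformly on $\cSd$, the $|p|$ factor being characteristic of the soft-photon tail and providing the mechanism by which the gradient $\nabla u$ enters the estimate. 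Integrating over $p$ converts this into the $L^2$-bound $C\rho_\e^{1/2}\|\nabla u\|_{L^2}$ on the kinetic error.

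\textbf{Potential piece and growth of $\|\nabla u_s^\e\|$.} In Fourier the potential commutator reads
\begin{equation*}
\int\widehat{\Ve}(p-q)\,\widehat{u}(q)\bigl[\id_{\cSd}(q)\PhiN^{\rho_\e}(q)-\id_{\cSd}(p)\PhiN^{\rho_\e}(p)\bigr]\,dq.
\end{equation*}
Since $\supp\widehat{\Ve}\subset\{|k|\le\e\}$, only $|p-q|\le\e$ contributes, and a first-order Taylor expansion combined with a uniform Lipschitz bound on $p\mapsto\id_{\cSd}(p)\PhiN^{\rho_\e}(p)$ from \cite{cfp1,cfp2} gives a bound $C\e\|u\|_{L^2}$ (using also $\|\widehat V\|_{L^1}<\infty$). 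With $\rho_\e=\e^{2/3-\delta}$, the total intertwining error at time $s$ is therefore at most $C_\delta\bigl(\e^{\frac13-\frac\delta2}\|\nabla u_s^\e\|_{L^2}+\e\|u_s^\e\|_{L^2}\bigr)$. A parallel Duhamel argument on $\nabla u_s^\e$, based on $\partial_s\nabla_x u_s^\e=-i\Hef\nabla_x u_s^\e-i\e(\nabla V)(\e x)u_s^\e$, produces $\|\nabla u_s^\e\|_{L^2}\le\e^\kappa+Cs\e$, while $\|u_s^\e\|_{L^2}=1$ is conserved. Integrating the intertwining bound over $s\in[0,t]$ and observing that for $\kappa<\tfrac13$ the linear term $\e t$ is absorbed into $\e^{\frac13-\frac\delta2+\kappa}t$, I obtain precisely $C_\delta(\e^{\frac13-\frac\delta2+\kappa}t+\e^{\frac43-\frac\delta2}t^2)$.

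\textbf{Main obstacle.} Everything hinges on two $\rho$-uniform mass-shell regularity inputs: the improved soft-photon estimate $\|(\h(p)-E(p))\PhiN^{\rho}(p)\|_\Fo\le C|p|\rho^{1/2}$ and the Lipschitz continuity of $p\mapsto\PhiN^{\rho}(p)$ on $\cSd$. Both are delicate on account of the infrared divergences of the massless photon field, and are precisely what the operator-theoretic renormalization group analysis of \cite{cfp1,cfp2} supplies. Any weakening of these estimates would force a different tuning of $\rho_\e$ versus $\e$ and would shrink the time range over which the effective dynamics remains a good approximation.
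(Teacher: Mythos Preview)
Your Duhamel decomposition into a kinetic intertwining error and a potential commutator is exactly the skeleton of the paper's argument, and your kinetic estimate $\|(\h(p)-E(p))\PhiN^{\rho}(p)\|_{\Fo}\le C\,|p|\,\rho^{1/2}$ is correct; it is the content of the paper's Appendix~A (proved there for the $\sigma>0$ model, uniformly in $\sigma$, which after applying the ground-state equation $K_\sigma^\rho(p)\Phi_\sigma^\rho(p)=E_\sigma(p)\Phi_\sigma^\rho(p)$ is exactly your statement).

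The genuine gap is your Lipschitz claim for $p\mapsto\PhiN^{\rho}(p)$. The references \cite{cfp1,cfp2} do \emph{not} provide this: for $\rho=1$ they yield $\theta$-H\"older continuity only for $\theta<\tfrac14$, and the paper's own Proposition~\ref{prop:Holder} pushes this to $\theta<\tfrac23$ with H\"older constant of order $1+\ln(\rho^{-1})$. A uniform Lipschitz bound is \emph{not} available and is in fact not expected: Step~1 of the proof of Proposition~\ref{prop:Holder} shows that at fixed infrared cutoff $\sigma>0$ the Lipschitz constant is of order $\sigma^{-1/2}$, which diverges as $\sigma\searrow0$. This is precisely the infrared manifestation you allude to in your ``Main obstacle'' paragraph, but you then over-claim what the literature supplies. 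With the correct input (H\"older exponent $\theta=\tfrac23-\delta$), the potential commutator is bounded by $C_\delta\,\e^{2/3-\delta}\,(1+\ln(\rho_\e^{-1}))$ rather than $C\e$; since $\kappa<\tfrac13-\tfrac{\delta}{2}$ by hypothesis, this term is still dominated by $\e^{1/3-\delta/2+\kappa}$ after absorbing the logarithm, so the final bound \eqref{eq-main-thm-difference-0-1} survives --- but your justification as written does not.

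A secondary methodological point: the paper does not work directly at $\sigma=0$. It first proves the analogue of your estimates for the regularized model $H_\sigma^V$, $\Jop^\rho$, $\Phi_\sigma^\rho$, uniformly in $\sigma>0$ (Theorem~\ref{thm-main-1-1}), and then removes the cutoff by showing norm-resolvent convergence $H_\sigma^V\to H^V$ together with $\Phi_\sigma^\rho(p)\to\Phi^\rho(p)$ (Proposition~\ref{prop-s-lim}). This is not merely cosmetic: $\Phi^\rho(p)$ is \emph{defined} as a limit, and the identity you implicitly use, that $\Phi^\rho(p)$ is a ground state of a fiber Hamiltonian with eigenvalue $E(p)$, is not directly available at $\sigma=0$ (recall $H(p)$ itself has no ground state for $p\neq0$). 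Your kinetic estimate therefore needs to be read as a statement proved at $\sigma>0$ and passed to the limit.
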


\begin{remark}
We note that for this result, the regularity properties of the dressed electron states are crucial,
as described in \eqref{Phi}, below.
\end{remark}

\begin{remark}  Theorem \ref{thm-main-1-0} implies that, for all 
$\delta' >0$ such that $\delta' < \frac13 - \frac{\delta}{2} + \kappa$ 
\eqn
	\| \, e^{-it H^V} \,\JopN^{\rho_\e} (\,  u_0^\e \, )
	\, - \,\JopN^{\rho_\e} (\,  e^{-itH_{{\rm eff}}} \, u_0^\e \, ) \, \|_{\cH}
	\ \leq \  C_\delta \, \, \e^{\delta'} 
\eeqn
holds for all times $t$ with 
$0 \leq  t  \leq 
\e^{-(\frac{1}{3} - \frac{\delta}{2} + \kappa) + \delta'}$.
\end{remark}

\begin{remark}\label{mass-rem} 
The initial conditions in Theorem \ref{thm-main-1-0}  are chosen such that the initial momentum is $O(\e^\kappa) $. 
The conditions on the external potential imply that the expected force, and, thus, the acceleration, is of order  $O(\e)$.  
Hence, at time $t$, the momentum is of order $O(\e^\kappa) + O(\e t)$, and therefore the action, $E(p) t - E(0) t \approx \frac{1}{2m_{ren}} p^2 t$, is of order $O(\e^{2\kappa} t) + O(\e^2 t^3)$.  
Hence, if $\frac13-\kappa > \frac{\delta}{2}$ and $t\le \e^{-1+\kappa}$, then this term is much larger than the error term in Eq.  \eqref{eq-main-thm-difference-0-1}. 
  
To make this remark more precise, we
define the operator $\widetilde H_{\rm eff} := E(0) + V(\e x)$, and consider the difference between  
$e^{-i t H_{\rm eff}}$ and $e^{-i t \widetilde H_{\rm eff}}$.  
We write $e^{-i t H_{\rm eff}} - e^{-i t \widetilde H_{\rm eff}}$ as the integral of a derivative,
\eqn
	e^{-i t H_{\rm eff}} - e^{-i t \widetilde H_{\rm eff}}=-i \,  \int_0^t ds 
	\, e^{-i(t-s)H_{\rm eff}} \, (E(p)   -E(0))e^{-is \widetilde H_{\rm eff}} \,,
\eeqn 
and use that $E'(0) =0$  so that  $cp^2<E(p)   -E(0)=\frac{1}{2m_{\rm ren}} p^2 (1+o(1))<Cp^2$
(see Proposition \ref{prp-En-properties}, below).
Then, using 
\eqn 
	e^{i s \widetilde H_{\rm eff}} p^2e^{-i s \widetilde H_{\rm eff}}  
	\, = \, 
        p^2 +2\e \, p \cdot (\n V)(\e x)  s+\e^2 \Delta V (\e x) s^2 \, , 
\eeqn 
we find that 
\eqn \label{eff-prop-exp}
	e^{-i t H_{\rm eff}} - e^{-i t \widetilde H_{\rm eff}} 
	\, = \, A_t + O(\e t^2 p)+ O(\e^2 t^3),  
\eeqn
where $A_t = O(t p^2)$. Adding the second and third term on the r.h.s.\
of \eqref{eff-prop-exp} to the error estimated by \eqref{eq-main-thm-difference-0-1,1}, we observe that 
\eqn \label{eff-prop-exp-1,1}
O(t \e^{\frac{1}{3} - \frac{\delta}{2} + \kappa}) + O(\e t^2 p)+ O(\e^2 t^3)
\ = \ 
O\big[t (\e^{\frac{1}{3} - \frac{\delta}{2} + \kappa} +
\e^{\frac{1}{3} + \kappa} + \e^{\frac{2}{3}} ) \big]
\ = \ 
O(t \e^{\frac{1}{3} - \frac{\delta}{2} + \kappa}) ,
\eeqn
provided that $0 < t \leq \e^{-2/3}$. Assuming that $A_t$ is not
only bounded above by $O( t p^2)$, but is actually of order 
\eqn \label{eff-prop-exp-1,2}
\| A_t \, u_0^\e \| \ \geq \ C \, t \, \e^{2\kappa} \, ,
\eeqn
with $C \equiv C(u_0^\e, m_{\rm ren}) > 0$ depending on
the initial data and on the renormalized mass $m_{\rm ren}$,
we can compare this contribution to \eqref{eff-prop-exp-1,1} and observe
that
\eqn \label{eff-prop-exp-1,3}
	\frac{ \| \, e^{-it H^V} \,\JopN^{\rho_\e} (\,  u_0^\e \, ) 
	\, - \,  \JopN^{\rho_\e} (\,  e^{-it\widetilde H_{{\rm eff}}} \, u_0^\e \, )
	\, - \, \JopN^{\rho_\e} (\,  A_t \, u_0^\e \, ) \, \|_{\cH} }
	{\| A_t \, u_0^\e \|}
\ \leq \  
O(\e^{\frac{1}{3} - \frac{\delta}{2} - \kappa}) 
\eeqn
provided $\e^{-2\kappa} \leq t \leq \e^{-2/3}$. Thus our estimate
allows us to separate the main contribution of the dynamics from
the error terms on a suitable time scale.
\end{remark}

 \subsection{Outline of proof strategy}
To prove Theorem \ref{thm-main-1-0}, we introduce an
{\em infrared regularized} version of the
model defined by \eqref{He-def-0}, \eqref{Hn-def-0}, obtained by restricting the
integration domain in the quantized electromagnetic vector potential \eqref{Af-def-0-1}
to the region $\{ \sigma \le |k| \le 1 \}$, for an arbitrary infrared cutoff $\sigma>0$.
Thereby, we obtain infrared regularized Hamiltonians $\He$ and $\Hn$,
as well as an infrared regularized family of maps $\Jop^\rho$
corresponding to $\JopN^\rho$.

We note that,
unlike $H(p)$,  the
infrared cut-off
fiber Hamiltonian $\Hn(p)$ has a ground state
$\Psig(p)\in\cH_p\cong\Fo$, for every $p\in\cS$ and for $\sigma>0$,
but $\Psig(p)$ does not possess a limit in $\cH_p\cong\Fo$,  as $\sigma \searrow 0$,
when $p\neq0$.
In particular, we expect that the number of photons in the state
$\Psig(p)$ diverges, as $\sigma\searrow0$, (thus the lack of convergence of $\Psig(p)$ in $\Fo$).
This is a well-known aspect of the {\em infrared problem} in QED, \cite{chfr,cfp1,cfp2,fr1,pi}.
It is remedied by
applying a dressing transformation, $W_{\nablE(p)}^{\s,\rho}$,
defined in
\eqref{eq-Phsig-def-0-1}, below,  to $\Psig(p)$,
where $\Eg(p)=\inf{\rm spec}\Hn(p)$.
The resulting vector, 
$\Phsig^\rho(p)  := W_{\nablE(p)}^{\s,\rho}  \Psig(p)$,
describes an {\em infraparticle} (or {\em dressed electron) state} containing
infrared photons
with frequencies in $[\sigma,\rho]$.
As $\sigma\searrow0$, the limit 
\eqn\label{Phi}
	\Phi^\rho(p)=\lim_{\sigma\rightarrow0}\Phsig^\rho(p)
\eeqn
exists in $\Fo$, for all $p\in\cS$; see Proposition \ref{prp-Phsig-Holder-1}.
This allows us to construct the map  $\JopN^\rho$ as
the limit of the maps $\Jop^\rho$, as $\sigma\searrow0$.
Note that, while $\Psi_\sigma(p)$ does not converge in $\Fo$ 
as $\sigma\searrow0$ when $p\neq0$,
we have that $\lim_{\sigma \searrow 0} E_\sigma(p) = E(p)$.

We note that $\Phsig^\rho(p)$ is the ground state eigenvector of the
fiber Hamiltonian
\eqn\label{eq:Kn2}
	\Kn^\rho(p) \, := \, W_{\nablE(p)}^{\s,\rho} \, \Hn(p) \, (W_{\nablE(p)}^{\s,\rho})^*
\eeqn
which is obtained by applying to $\Hn(p)$  the {\em Bogoliubov transformation}
corresponding to the dressing transformation $W_{\nablE(p)}^{\s,\rho}$.

In Theorem \ref{thm-main-1-1}, below, we prove that an estimate similar to
\eqref{eq-main-thm-difference-0-1}
is satisfied for the infrared regularized model; namely,
\eqn\label{eq-main-thm-difference-0-3}
	\| \, e^{-it H^V_\sigma} \,\Jop^\rho (\,  u_0^\e \, )
	\, - \,\Jop^\rho (\,  e^{-itH_{{\rm eff},\sigma}} \, u_0^\e \, ) \, \|_{\cH}
	\, \leq \,  C_\delta \, (1+\ln(\rho^{-1})) \,  \e^{\frac23-\delta} \, t
	+ C \, \alpha^{\frac12} \, \rho^{\frac12} \,   t \, (\e^\kappa +\e t )  \,  ,
\eeqn
holds uniformly in the infrared cutoff $\sigma$  and the cut-off  $\rho >\s$.
This result crucially uses the regularity properties of the dressed electron states
$\Phsig^\rho(p)$,
which allow us to take advantage of the fact that $V_\e$ is slowly varying.
An additional key ingredient
is the bound $\|(\Hn(p)-\Kn^\rho(p))\Phsig^\rho(p)\|_{\Fo}\,
\leq \, C  \alpha^{\frac12} \, \rho^{\frac12}|p|$,
for $p\in\cS$, proven in
Appendix \ref{ssec-Dop2-2}.
In \eqref{eq-main-thm-difference-0-3} we take $\rho=\rho_\e := \e^{\frac23-\delta}$ and absorb $\ln(\rho^{-1}) $ into $\e^{\frac23-\delta}$.

In Section \ref{sec-IRlim-1}, we control the limit $\sigma\searrow0$, thus concluding the
proof of Theorem \ref{thm-main-1-0}.
This requires
control of the radiation emitted by the electron
due to its acceleration in the external potential $\Ve$, in the limit
$\sigma\searrow0$.

\subsection*{Acknowledgements}
The authors are very grateful to Herbert Spohn and Stefan Teufel for
pointing out a somewhat serious problem in an earlier version of this paper and for  
suggesting to us a solution. J.Fa., I.M.S., and T.C. are  grateful to J.Fr. for hospitality at ETH Z{\"u}rich. T.C. thanks I.M.S. for hospitality at the University of Toronto.  The research of I.M.S. has been supported by NSERC under Grant NA 7901. T.C. has been supported by the NSF under grants
DMS-0940145, DMS-1009448, and DMS-1151414 (CAREER).

$\;$ \\

\section{Infrared cut-off and construction of $\PhiN^\rho(p)$}

As noted in the introduction, we analyze the original dynamics by first imposing
an infrared (IR) cut-off, and controlling the dynamics generated by the resulting Hamiltonian.
Thus, we  define the IR regularized Hamiltonian
\eqn\label{eq-He-def-1}
	\He \, = \, \Hn \, + \, \Ve(x)\otimes\1_f \, ,
\eeqn
where
\eqn\label{eq-Hn-def-1}
	\Hn \, := \, \frac12( \,- i\nabla_x \otimes \1_f \,  + \, \g \Af(x) \, )^2
	\, + \, \1_{el} \otimes H_f \,
\eeqn
is the generator of the dynamics of a single, freely moving
non-relativistic electron minimally coupled to
the electromagnetic radiation field.
In \eqref{eq-Hn-def-1},
\eqn
	\Af(x) \, = \,
	\sum_{\lambda}\int_{\sigma \le |k| \le 1} \, \frac{ dk }{|k|^{1/2}} \,
	\{ \,  \e_\lambda(k) \,  e^{ikx} \otimes a_\lambda(k) \, + \, h.c. \, \} \label{eq:defAf}
\eeqn
denotes the quantized electromagnetic vector potential
with an infrared and ultraviolet cutoff corresponding to
$\sigma \le |k| \le 1$.
Since $V \in L^\infty( \mathbb{R}^3 )$ is a bounded operator,
$D(H_\sigma^V) = D( H_\sigma ) = D( -\Delta_x \otimes \1_f + \1_{el} \otimes H_f )$.
The results in this paper are proven for sufficiently small values of the finestructure constant,
$0<\alpha\ll1$.

The Hamiltonian $\Hn$ is also translation invariant and, similarly to  $\h$, can be represented 
as the fiber integral
\begin{equation}\label{H-fib-deco-s}
U\Hn U^{-1} = \int_{\mathbb{R}^3}^{\oplus} \Hn(p) dp,
\end{equation}
over the spectrum of $\Ptot$,  defined on the fiber integral $\int^{\oplus}dp \, \cH_p$, with   fibers $\cH_p \cong\Fo^b$. 
 The decomposition \eqref{H-fib-deco-s}  is equivalent to
 \eqn \label{H-fib-deco-s'}	(U\Hn \Psi)(p) \, = \, \Hn(p)  (U\Psi)(p).
\eeqn
Again, by abstract theory, the  fiber Hamiltonians $\Hn(p)$, $p \in \mathbb{R}^3$, are self-adjoint operators on 
$\cH_p\cong\Fo^b$. 
Written in terms of the creation- and annihilation operators on the fiber space, they are given by
\begin{equation}\label{Hp}
	\Hn(p) =   \frac{1}{2}  \big (p - P_f^b  - \, \g \Af^b)^2 + H_f^b
\end{equation}
where 
\eqn
	H_f^b \, := \, \sum_{\lambda}\int dk \, |k| \, b_\lambda^*(k) \, b_\lambda(k)
	\; \; \; , \; \; \; \;
	P_f^b \, := \, \sum_{\lambda}\int dk \, k \, b_\lambda^*(k) \, b_\lambda(k)
\eeqn
and
\begin{equation}\label{Achi}
	\Af^b \, := \,
	\sum_{\lambda}\int_{\sigma \le |k| \le 1}    \, \frac{dk}{|k|^{1/2}} \,
	  \,\{ \,   \e_\lambda(k) \, b_\lambda(k) \, + \, h.c.\, \}. 
\end{equation}
Henceforth, we will drop the superscripts $"b"$ from the notation.

While $H(p)$ has a ground state only for $p=0$,
it is proven in \cite{bcfs2,ch} that, for
$p\in\cS:= \{ p \in \mathbb{R}^3 | |p| \le 1/3\}$
and $\sigma>0$,  $\Hn(p)$ has  a non-degenerate 
(fiber) ground state.  This motivates the introduction of the cut-off.
Properties of the fiber ground state energy, $\Eg(p)\, = \, \inf\spec \Hn(p)$, 
are given in the following proposition proven in \cite{bcfs2,ch,cfp1,cfp2}:

\begin{proposition}\label{prp-En-properties}
There exists a  constant $0<\alpha_0 \ll1$ such that for all $0 < \alpha \leq \alpha_0$, 
the
infimum of the spectrum of the fiber Hamiltonian,
\eqn
	\Eg(p) \, = \, \inf\spec \Hn(p),
\eeqn
satisfies:
\begin{enumerate}
\item 
For any $\sigma>0$, $\Eg\in C^2(\cS)$, and for all
$p \, \in \, \cS \, = \, \big\{ \, p \in \mathbb{R}^3 \, | \,  |p| \le \frac13 \, \big \} \,,$
 $\Eg(p)$ is a simple eigenvalue.
\\

\item
There exists a constant $c < \infty$
such that, for any $p\in\cS$ and $\sigma \geq 0$, we have that
\eqn
	  |\nabla_{p} \Eg(p) - p | \, \le \,  c \, \alpha \, |p| \, ,
 \quad \mbox{and} \quad 	1 - c \, \alpha \, \le \, \partial_{|p|}^2 \Eg(p) \, \le 1\, . \label{eq:boundd2E}
\eeqn
\\

\item
The following limit exists in $C^2(\cS)$
\eqn
	\lim_{\sigma\searrow0}\Eg( \, \cdot \,) \, = \, E( \, \cdot \,).
\eeqn

\end{enumerate}
\end{proposition}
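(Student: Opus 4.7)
The proof proposal is to establish the three parts sequentially, drawing on the operator-theoretic renormalization and multiscale techniques developed in \cite{bcfs2,ch,cfp1,cfp2}.

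For part (1), the plan is to show that, for fixed $\sigma > 0$ and $p \in \cS$, the fiber Hamiltonian $\Hn(p)$ has a non-degenerate ground state separated from the rest of the spectrum. The infrared cutoff endows $H_f$ restricted to $\{|k| \geq \sigma\}$ with a gap of size $\sigma$, and for small $\alpha$ this persists for $\Hn(p)$ by a Feshbach-Schur reduction to the vacuum sector: decompose $\Fo = \C\vac \oplus \vac^\perp$, write the formal eigenvalue equation for $\Hn(p) - z$ as an effective problem on the one-dimensional vacuum subspace, and show that the resulting effective Hamiltonian is analytic in both $z$ and $p$. Since the map $p\mapsto \Hn(p)$ is manifestly analytic (the only $p$-dependence enters through the shift $p - P_f - \sqrt\alpha \Af$), analytic perturbation theory then yields simplicity and $C^2$ (in fact analytic) dependence of $\Eg(p)$ on $p\in \cS$, with $\cS = \{|p|\le 1/3\}$ chosen so that the ground state energy stays below the essential spectrum threshold $\inf_{k\neq 0}(E_\sigma(p-k)+|k|)$.

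For part (2), at $\alpha = 0$ the free dispersion gives $\nabla E = p$ and $\partial_{|p|}^2 E = 1$. I would use the Feynman-Hellmann representation
\begin{equation*}
\nabla_p \Eg(p) \, = \, \langle \Psig(p), (p - \Pf - \sqrt\alpha \Af) \Psig(p) \rangle,
\end{equation*}
together with a standard pull-through/soft-mode estimate
$\| (\Pf + \sqrt\alpha \Af)\Psig(p) \| \le c\alpha |p|$,
uniform in $\sigma$, to derive $|\nabla_p \Eg(p) - p| \le c\alpha |p|$. The lower bound $\partial_{|p|}^2 \Eg \ge 1 - c\alpha$ (strict convexity) would follow by differentiating the Feynman-Hellmann formula once more and controlling the reduced resolvent $(\Hn(p) - \Eg(p))^{-1} Q_{\Psig(p)}$ via the same pull-through bounds. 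The upper bound $\partial_{|p|}^2 \Eg \le 1$ is the cleaner half: by the variational principle, using the rotation invariance of $\Hn$ in $p$ and the trial state $\Psig(0)$, one obtains $\Eg(p) \le \Eg(0) + \tfrac12 |p|^2$, whence $\partial_{|p|}^2 \Eg \le 1$ after subtracting the corresponding lower bound near any chosen $p_0$.

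For part (3), the main obstacle is that $\Psig(p)$ does not converge in $\Fo$ as $\sigma\searrow 0$ for $p \neq 0$, so one cannot naively pass to the limit in the Feynman-Hellmann formulas. The plan is to use the dressed vectors $\Phsig^\rho(p) = W_{\nablE(p)}^{\sigma,\rho} \Psig(p)$, which by \eqref{Phi} and the Hölder bounds of \cite{cfp1,cfp2} do have a limit in $\Fo$, together with the fact that the dressing transformation leaves the energy invariant, to rewrite the derivatives of $\Eg(p)$ as expectation values against the dressed states and the Bogoliubov-transformed operator $\Kn^\rho(p)$. Telescoping between scales $\sigma < \sigma'$ and estimating the soft-photon contributions $\int_{\sigma}^{\sigma'} dk\, |k|^{-1} \cdots$ then yields a Cauchy sequence argument in $C^2(\cS)$. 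The hardest step will be the uniform-in-$\sigma$ control of the second derivative, which demands an infrared-stable bound on the reduced resolvent; this is exactly where the multiscale/renormalization-group analysis of \cite{bcfs2,cfp1,cfp2} is essential. Once such uniform $C^2$ bounds and pointwise convergence are established, an Arzelà-Ascoli argument completes the proof.
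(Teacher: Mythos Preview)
The paper does not actually prove this proposition; immediately before the statement it writes that the result is ``proven in \cite{bcfs2,ch,cfp1,cfp2}'' and then moves on without further argument. So there is no proof in the paper to compare your sketch against --- the proposition is imported wholesale from the literature.

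That said, your outline is broadly in line with what those references do. A couple of small remarks: for the upper bound $\partial_{|p|}^2 \Eg \le 1$ in part (2), the cleanest route is not to use $\Psig(0)$ but $\Psig(p_0)$ as a trial state for $\Hn(p)$ at an arbitrary $p_0\in\cS$; since $\Hn(p)=\Hn(p_0)+(p-p_0)\cdot\nabla_p\Hn(p_0)+\tfrac12|p-p_0|^2$ exactly, the variational principle together with Feynman--Hellmann gives $\Eg(p)\le \Eg(p_0)+(p-p_0)\cdot\nabla\Eg(p_0)+\tfrac12|p-p_0|^2$, and taking $p-p_0$ radial yields $\partial_{|p|}^2\Eg(p_0)\le1$ directly. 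Your phrase ``after subtracting the corresponding lower bound'' is more cryptic than necessary. For part (3), note that the convergence of $\Eg$ and its first two derivatives does not actually require the dressed states $\Phsig^\rho$: the energy and $\nabla\Eg$ are the same before and after the Bogoliubov transformation, and the Cauchy estimates in $\sigma$ for $\Eg$, $\nabla\Eg$, and $\partial^2\Eg$ in \cite{cfp2,frpi} proceed via multiscale decompositions of the resolvent rather than via convergence of the ground states themselves. Invoking \eqref{Phi} here is circular, since the paper derives the existence of that limit \emph{from} Proposition~\ref{prp-En-properties} (specifically the uniform bounds on $\nabla\Eg$).
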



We let $\Psig(p)\in\Fo$, with $\|\Psig(p)\|_\Fo=1$,
denote the normalized fiber ground state corresponding to $\Eg(p)$,
\eqn
	\Hn(p)\Psig(p) \, = \, \Eg(p) \, \Psig(p) \,,
\eeqn
for $p\in\cS$.
For $0<\sigma<\rho \le 1$ and $p\in\cS$,  we introduce the Weyl operators
\eqn\label{eq-Weylop-def-1}
	W_{\nablE(p)}^{\s, \rho}
	\, := \,   \exp\Big[\alpha^{\frac12}\sum_\lambda\int_{\sigma\le|k|\le\rho} dk \,
	\frac{\nablE(p)\cdot\e_\lambda(k)b_\lambda(k) -h.c.}{|k|^{1/2}(|k|-\nablE(p)\cdot k)}\Big] \,,
\eeqn
with $\nablE(p)  \equiv  \nabla_p \Eg(p)$, which are unitary on $\Fo$, for $\sigma>0$.
Moreover, we define dressed electron states
\eqn\label{eq-Phsig-def-0-1}
	\Phsig^\rho(p) \, := \, W_{\nablE(p)}^{\s, \rho} \, \Psig(p) \,.
\eeqn
For $p\in\cS$, we define the Bogoliubov-transformed fiber Hamiltonians
\eqn \label{eq:Kn}
	\Kn^\rho(p) \, := \, W_{\nablE(p)}^{\s, \rho} \, \Hn(p) \, (W_{\nablE(p)}^{\s, \rho})^* \, .
\eeqn 
It is convenient to define $\Kn^\rho(p) := \Hn(p)$, 
for $p \in \mathbb{R}^3 \setminus \cS$.

The  dressed electron states $\Phsig^\rho(p)$, for $p\in\cS$,
are the ground states of  the Bogoliubov-transformed fiber Hamiltonians $\Kn^\rho(p)$, defined in \eqref{eq:Kn}, i.e.,
\eqn
	\Kn^\rho(p) \, \Phsig^\rho(p) \, = \, \Eg(p) \, \Phsig^\rho(p) \,.
\eeqn
The properties of these states
are described in the following proposition
\begin{proposition}\label{prp-Phsig-Holder-1}
For any $p\in \cS$, $0<\rho\le1$, and for sufficiently small values of the finestructure constant $0<\alpha\ll1$,
the ground state eigenvector $\Phsig^\rho(p)$ 
satisfies:
\begin{enumerate}
\item
The strong limit
\eqn
	\Phi^\rho(p) \, := \, \lim_{\sigma\rightarrow0}\Phsig^\rho(p)
\eeqn
exists in $\Fo$.
\\
\item
For $\holder <\frac23$, the vectors $\Phsig^\rho(p)$ are $\holder $-H\"older continuous in $p$,
\eqn
	\sup_{p,q\in\cS}\frac{\| \, \Phsig^\rho(p)-\Phsig^\rho(q) \, \|}{|p-q|^\holder } 
	\, \le \, C(\holder ) \;
	\ln\frac1\rho
	\, < \, \infty  \, ,
\eeqn
uniformly in $\sigma$, with $0 \le \sigma < \rho \le 1$.
\end{enumerate}
\end{proposition}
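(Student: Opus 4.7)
The starting point for both parts is that $\Phsig^\rho(p)$ is the unique ground state of the Bogoliubov-transformed fiber Hamiltonian $\Kn^\rho(p)$ defined in \eqref{eq:Kn}, with eigenvalue $\Eg(p)$ separated from the remainder of the spectrum by a gap of order $\sigma$ (uniformly in $p\in\cS$, for $\alpha$ small). The dressing $W^{\sigma,\rho}_{\nablE(p)}$ subtracts the leading infrared-singular piece of the Pauli--Fierz vertex, effectively replacing the form factor $\sqrt{\alpha}/|k|^{1/2}$ by an IR-integrable one. This improvement is what makes the $O(\sigma^{-1})$ resolvent growth harmless in the estimates below.

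\emph{Part (1): existence of the limit.} I would prove that $\{\Phsig^\rho(p)\}_{\sigma>0}$ is Cauchy in $\Fo$ along the geometric sequence $\sigma_n:=\kappa^n\rho$, $\kappa\in(0,1)$. The Riesz projection formula together with the second resolvent identity gives
\begin{equation*}
\Phi^\rho_{\sigma_{n+1}}(p)-\Phi^\rho_{\sigma_n}(p)\;=\;\frac{1}{2\pi i}\oint_{\Gamma_n}(z-K^\rho_{\sigma_{n+1}}(p))^{-1}\bigl(K^\rho_{\sigma_{n+1}}(p)-K^\rho_{\sigma_n}(p)\bigr)(z-K^\rho_{\sigma_n}(p))^{-1}\Phi^\rho_{\sigma_n}(p)\,dz,
\end{equation*}
with $\Gamma_n$ a contour of radius $\sim\sigma_n$ around $E_{\sigma_n}(p)$. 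The gap produces $\|(z-K^\rho_{\sigma_n}(p))^{-1}P^\perp\|\le C/\sigma_n$, and a Bogoliubov-algebra computation, parallel to the appendix bound $\|(\Hn(p)-\Kn^\rho(p))\Phsig^\rho(p)\|\le C\alpha^{1/2}\rho^{1/2}|p|$, yields $\|(K^\rho_{\sigma_{n+1}}(p)-K^\rho_{\sigma_n}(p))\Phi^\rho_{\sigma_n}(p)\|_\Fo\le C\alpha^{1/2}\sigma_n^{3/2}$. The extra power of $\sigma_n$ beyond the naive vertex strength reflects the improved IR behavior of the dressed coupling on the small annulus $\sigma_{n+1}\le|k|\le\sigma_n$. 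Combined, $\|\Phi^\rho_{\sigma_{n+1}}(p)-\Phi^\rho_{\sigma_n}(p)\|\le C\alpha^{1/2}\sigma_n^{1/2}$, and the telescoping series converges geometrically.

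\emph{Part (2): H\"older continuity.} I would combine two kinds of estimates at a free scale $\tau\in(\sigma,\rho]$. Part (1) supplies the approximation bound $\|\Phsig^\rho(p)-\Phi^\rho_\tau(p)\|\le C\alpha^{1/2}\tau^{1/2}$, uniformly in $p$. At fixed $\tau$, analytic perturbation theory applied to the analytic family $p\mapsto K^\rho_\tau(p)$, combined with the uniform $C^2$ regularity of $\nabla_p\Eg$ from Proposition~\ref{prp-En-properties}, yields a Lipschitz-in-$p$ bound of the form
\begin{equation*}
\|\Phi^\rho_\tau(p)-\Phi^\rho_\tau(q)\|_\Fo \;\le\; C\,(\ln\rho^{-1})\,\tau^{-\beta}\,|p-q|
\end{equation*}
for some $\beta>0$, in which the $\ln\rho^{-1}$ absorbs the logarithmic $L^2$-norm obtained by differentiating the Weyl coupling through $\nablE(p)$, and $\beta$ comes from the interplay between the spectral gap and the Bogoliubov cancellations in $(\partial_p K^\rho_\tau(p))\Phi^\rho_\tau(p)$. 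Triangle inequality and optimization $\tau^{1/2}=\tau^{-\beta}|p-q|$ (i.e.\ $\tau\sim|p-q|^{2/(1+2\beta)}$) then give the claimed H\"older bound with exponent $1/(1+2\beta)$; reaching any $\theta<2/3$ requires $\beta$ close to $1/4$.

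\emph{Main obstacle.} The delicate point is producing the Lipschitz-type bound above with $\beta$ as small as $1/4$: a naive perturbation argument using only the $O(\sigma)$ gap gives $\beta=1$ and hence only exponent $1/3$, well short of the target. The improvement must be extracted from the structure of the dressing: at leading order in $\alpha$, the $p$-dependence of $\Psig(p)$ is absorbed into the parameter $\nablE(p)$ of $W^{\sigma,1}_{\nablE(p)}$, so the residual $p$-derivative of $\Phi^\rho_\tau(p)$ couples only through the dressed, IR-regular vertex with form factor $\sim\sqrt{\alpha}\,|k|^{1/2}$ on $\tau\le|k|\le\rho$. Controlling $(\partial_p K^\rho_\tau(p))\Phi^\rho_\tau(p)$ by a photon-number decomposition and exploiting the regularity of $\nablE$ then yields the required $\tau$-scaling; the remaining bookkeeping is routine.
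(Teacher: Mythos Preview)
Your overall architecture---approximate at an intermediate scale $\tau$, prove a Lipschitz-in-$p$ bound at that scale with some $\tau^{-\beta}$ blowup, then optimize---is exactly the paper's strategy. The gap is in the exponents you claim to be able to reach.

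In Part~(1) your Riesz--resolvent telescoping yields only the rate $\|\Phi_\sigma^\rho(p)-\Phi_\tau^\rho(p)\|\lesssim\tau^{1/2}$. The paper instead invokes the stronger bound $\lesssim\tau^{1-\delta}$ (Proposition~\ref{prop:convergence_GS} and Corollary~\ref{cor:convergence_GS}, quoted from \cite{cfp2,frpi}). Because your approximation exponent is only $1/2$, the optimization $\tau^{1/2}=\tau^{-\beta}|p-q|$ forces you to need $\beta=1/4$ to hit $\theta=2/3$. But $\beta=1/4$ is precisely where your argument becomes a hand-wave: the ``photon-number decomposition'' you invoke does not produce a fourth-root gain, and no mechanism in the dressed model is known to do so.

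The paper avoids this trap by working with the better approximation exponent $1-\delta$, so that $\beta=1/2$ suffices. That value is obtained by a clean structural trick you are missing: since $H_\sigma(p)$ is quadratic in $p$, one has the exact expansion $\widetilde K_\sigma(p+k)=\widetilde K_\sigma(p)+k\cdot\nabla_p\widetilde K_\sigma(p)+\tfrac12 k^2$, and Feynman--Hellmann gives $\langle\widetilde\Phi_\sigma(p),\nabla_p\widetilde K_\sigma(p)\,\widetilde\Phi_\sigma(p)\rangle=\nabla E_\sigma(p)$. Hence
\[
\big\|(\widetilde K_\sigma(p+k)-E_\sigma(p+k))^{1/2}\widetilde\Phi_\sigma(p)\big\|^2
=E_\sigma(p)-E_\sigma(p+k)+k\cdot\nabla E_\sigma(p)+\tfrac12 k^2\le C\,k^2,
\]
using only the $C^2$ regularity of $E_\sigma$. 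Combined with the gap $\eta\sigma$ via $\widetilde\Pi_\sigma^\perp(p+k)\le(\eta\sigma)^{-1}(\widetilde K_\sigma(p+k)-E_\sigma(p+k))$, this yields $\|\widetilde\Pi_\sigma^\perp(p+k)\widetilde\Phi_\sigma(p)\|\le C|k|\sigma^{-1/2}$, i.e.\ $\beta=1/2$ directly. No IR cancellation beyond the Weyl dressing and no photon-number analysis is needed. Your proposal should replace the unjustified $\beta=1/4$ step with this argument, and either strengthen your Part~(1) rate to $\tau^{1-\delta}$ or cite it.
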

The proof of $\holder$-H\"older continuity for $\holder<\frac23$ is given in
Section \ref{sec-HolderContin-1}; (see also \cite{cfp1,cfp2,pi} for earlier results
covering the range $\holder <\frac14$, in the case where $\rho=1$).

For arbitrary $u\in L^2(\R^3)$ (with Fourier transform denoted by $\widehat u$),
we define the linear map
\eqn \label{eq:defJrhosigma}
	\Jop^\rho \, : \,
	u & \mapsto & (2\pi)^{-3/2}  \int_{\cS} dp \, \widehat u(p) \,  \, e^{ix(p-\Pf)} \, \id_{\cSd}(p) \, \Phsig^\rho(p)  \,,
\eeqn
where $x$ is the electron position, $\id_{\cSd}$ is a smooth approximate characteristic function
of the set
\eqn
	\cSd \, := \, (1-\mu) \, \cS\subset\cS \subset \mathbb{R}^3, \label{eq:defSmu}
\eeqn
and $0 < \mu < 1$.
 Note that
$\Jop^\rho \, : \, L^2(\R^3)  \rightarrow  \cM\, \subset \, \cH$, where
\eqn\label{eq-Mspace-def-1}
	\cM \, := \, \Big\{  \, (2\pi)^{-3/2} \int_{\R^3} dp \, \widehat u(p) \, e^{ix(p-\Pf)} \, \id_{\cSd}(p) \, \Phsig^\rho(p)
	 \,
	\Big| \, u \in L^2(\R^3) \, \Big\},
\eeqn
 the subspace of vectors in $\cH$ supported on the one-particle shell of
the operator $\int_{\cS}^\oplus dp \, \Kn^\rho(p)$. We also note that in \eqref{eq-Mspace-def-1}
we do not require that ${\rm supp}(\widehat u)\subset\cSd$; instead,
we cutoff $\widehat u$ outside the region $\cSd$ by multiplying it by $\id_{\cSd}$.

Furthermore, we introduce the one-particle Schr\"odinger operator
\eqn\label{eq-Heff-def-1-0}
	\Heff \, := \, \Eeff(-i\nabla_x)  \, + \, \Ve(x) \,.
\eeqn
Here, the kinetic energy operator is defined by
\eqn
	\Eeff(p) \, := \, \Eg(p) \; , \;\; p\in\cS \,,
\eeqn
and suitably extended to $p \in \mathbb{R}^3\setminus \cS$. Note that the restriction of $\Eeff$ to
$\cS$ is twice continuously differentiable, $\Eeff|_\cS \in C^2(\cS)$;
see Proposition \ref{prp-En-properties}.

As a first step  towards 
proving Theorem \ref{thm-main-1-0}, we prove the following result.

\begin{theorem}\label{thm-main-1-1}
Under the conditions of Theorem \ref{thm-main-1-0},   there
exists
$ \alpha_\delta >0$ such that, for all $0 \leq \alpha \leq\alpha_\delta$,  the bound \eqref{eq-main-thm-difference-0-3}
holds uniformly in the infrared cutoff $\sigma  >0 $ 
and the cutoff $\rho >\sigma$.
\end{theorem}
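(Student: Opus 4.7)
The plan is to represent the difference of propagators as a Duhamel integral and then bound the resulting commutator-type error term pointwise in time by splitting the Hamiltonian into its kinetic and potential parts. More precisely, differentiating $e^{-i(t-s)\He}\Jop^\rho e^{-is\Heff}u_0^\e$ in $s$ and integrating from $0$ to $t$ yields
\begin{equation*}
	e^{-it\He}\Jop^\rho u_0^\e - \Jop^\rho e^{-it\Heff}u_0^\e
	\, = \, -i\int_0^t e^{-i(t-s)\He}\bigl(\He\Jop^\rho - \Jop^\rho\Heff\bigr)u_s^\e\,ds,
\end{equation*}
where $u_s^\e := e^{-is\Heff}u_0^\e$. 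Since $e^{-i(t-s)\He}$ is unitary, it suffices to estimate $\|(\He\Jop^\rho - \Jop^\rho\Heff)u_s^\e\|_\cH$ uniformly in $s$ and integrate. Writing $\He=\Hn+\Ve$ and $\Heff=\Eeff(-i\nabla_x)+\Ve$, we further split
\begin{equation*}
	\He\Jop^\rho - \Jop^\rho\Heff
	\, = \, \bigl(\Hn\Jop^\rho - \Jop^\rho\Eeff(-i\nabla_x)\bigr) + \bigl(\Ve\Jop^\rho - \Jop^\rho\Ve\bigr),
\end{equation*}
and analyse each piece in the fiber representation $(U\Jop^\rho u)(p)=\id_\cSd(p)\widehat u(p)\Phsig^\rho(p)$.

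For the kinetic piece, the identity $\Kn^\rho(p)\Phsig^\rho(p)=\Eg(p)\Phsig^\rho(p)$, together with $\Eeff|_\cS=\Eg$, gives
\begin{equation*}
	\bigl(U(\Hn\Jop^\rho - \Jop^\rho\Eeff(-i\nabla_x))u\bigr)(p)
	\, = \, \id_\cSd(p)\widehat u(p)\,\bigl(\Hn(p) - \Kn^\rho(p)\bigr)\Phsig^\rho(p).
\end{equation*}
Invoking the Appendix~\ref{ssec-Dop2-2} bound $\|(\Hn(p)-\Kn^\rho(p))\Phsig^\rho(p)\|_\Fo \le C\alpha^{1/2}\rho^{1/2}|p|$ for $p\in\cS$, and Plancherel, gives $\|(\Hn\Jop^\rho - \Jop^\rho\Eeff(-i\nabla_x))u\|_\cH \le C\alpha^{1/2}\rho^{1/2}\|\nabla u\|_{L^2}$. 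To apply this with $u=u_s^\e$, we control $\|\nabla u_s^\e\|_{L^2}$ via a Heisenberg-picture estimate: since $-i\nabla_x$ commutes with $\Eeff(-i\nabla_x)$ and $\|[\Ve,-i\nabla_x]\|_\infty = \|\e(\nabla V)(\e\cdot)\|_\infty \le C\e$ (using that $\widehat V$ has compact support and $\widehat V\in L^1$), a Gronwall argument yields $\|\nabla u_s^\e\|_{L^2}\le \e^\kappa + C\e s$. Integrating over $s\in[0,t]$ produces the contribution $C\alpha^{1/2}\rho^{1/2}\,t\,(\e^\kappa+\e t)$ on the right-hand side of \eqref{eq-main-thm-difference-0-3}.

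For the potential piece, writing $\Ve$ in Fourier, with $\widehat{\Ve}(k)=\e^{-3}\widehat V(k/\e)$ supported in $\{|k|\le\e\}$, one computes in fiber representation
\begin{equation*}
	\bigl(U(\Ve\Jop^\rho - \Jop^\rho\Ve)u\bigr)(p)
	\, = \, c\int dk\,\widehat{\Ve}(k)\,\widehat u(p-k)\,\bigl[\id_\cSd(p-k)\Phsig^\rho(p-k) - \id_\cSd(p)\Phsig^\rho(p)\bigr].
\end{equation*}
By Proposition~\ref{prp-Phsig-Holder-1}(ii) with H\"older exponent $\holder=\tfrac23-\delta$, combined with the smoothness of $\id_\cSd$, the bracket is bounded in $\Fo$, uniformly in $\sigma$, by $C_\delta(1+\ln\rho^{-1})|k|^{2/3-\delta}$. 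Applying Minkowski's inequality in $p$, using $\|\widehat u\|_{L^2}=1$, and the substitution $k=\e q$ yield
\begin{equation*}
	\int dk\,|\widehat{\Ve}(k)|\,|k|^{2/3-\delta} \,\le\, \e^{2/3-\delta}\,\|\widehat V\|_{L^1},
\end{equation*}
so that $\|(\Ve\Jop^\rho - \Jop^\rho\Ve)u_s^\e\|_\cH \le C_\delta(1+\ln\rho^{-1})\,\e^{2/3-\delta}$. Integrating over $s\in[0,t]$ produces the first term of \eqref{eq-main-thm-difference-0-3}.

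The main obstacle is the delicate interplay between the momentum-space cutoff $\id_\cSd$ and the momentum shifts by $|k|\le\e$ induced by $\Ve$: the cutoff must ensure that the ground state vectors $\Phsig^\rho(\,\cdot\,)$ are only evaluated in the region $\cS$ where Proposition~\ref{prp-Phsig-Holder-1} applies, while shifts of $p$ across the boundary of $\cSd$ are absorbed by the smoothness of $\id_\cSd$. The logarithmic factor $(1+\ln\rho^{-1})$ is intrinsic and traces back to the $\rho$-dependence of the H\"older constant in Proposition~\ref{prp-Phsig-Holder-1}(ii); any attempt to replace it by a stronger regularity statement of $\Phsig^\rho$ is precisely what we expect to fail because of the infrared singularity.
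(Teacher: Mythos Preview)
Your proposal is correct and follows essentially the same approach as the paper. Both arguments use the Duhamel formula, split the error into a kinetic piece controlled by the Appendix~\ref{ssec-Dop2-2} bound $\|(\Hn(p)-\Kn^\rho(p))\Phsig^\rho(p)\|_\Fo \le C\alpha^{1/2}\rho^{1/2}|p|$ together with the growth estimate $\|\nabla u_s^\e\|_{L^2}\le C(\e^\kappa+\e s)$, and a potential piece controlled by the H\"older continuity of $p\mapsto\id_{\cSd}(p)\Phsig^\rho(p)$ from Proposition~\ref{prp-Phsig-Holder-1}; the only cosmetic difference is that the paper introduces the operator $\Kn^\rho=\int^\oplus\Kn^\rho(p)\,dp$ and the identity \eqref{KHeff} as an intermediate organizational device, whereas you work directly in the fiber representation, and the paper makes the support argument (your ``main obstacle'') explicit via \eqref{eq-hatVe-supp-1}, observing that $\supp(\widehat\Ve)\subset\{|k|\le\e\}$ forces $p\in\cS$ whenever the integrand is nonzero, so that the H\"older bound on $\cS$ applies.
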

\begin{proof}
Our proof  makes crucial use of the
properties of the fiber ground state energy $\Eg(p)$ and of the
corresponding dressed electron states $\Phsig^\rho(p)$, for $p\in\cS$,
 given in Propositions \ref{prp-En-properties} and \ref{prp-Phsig-Holder-1} above.
 We define  the operator $\Kn^\rho$ acting on $\cH$,
\eqn\label{eq-Knrho-def-1}
	\Kn^\rho \, := \, \int^\oplus \Kn^\rho(p) \, dp,
\eeqn
and the perturbed operator $\Kn^V \, := \Kn^\rho+\Ve $.
Note that  the operator $\Kn^\rho$ has the  property that
\eqn\label{KHeff}
	\Kn^\rho\Jop^\rho =\Jop^\rho \Eeff (-i\n).
\eeqn

We  write the difference on the LHS of \eqref{eq-main-thm-difference-0-3}
as the integral of a derivative,  substitute $\Hn^V\rightarrow\Hn^V-\Kn^V+\Kn^V$ inside the integral and
group terms suitably to obtain 
\eqn\label{eq-maindiffterm-1}
	e^{-it H^V_\sigma} \,\Jop^\rho (\,  u_0^\e \, )
	\, - \,\Jop^\rho (\,  e^{-itH_{{\rm eff},\sigma}} \, u_0^\e)
	&=&
	-i \, e^{-it\He} \int_0^t ds \, e^{is\He} \, ( \, \Hn^V \Jop^\rho( u_s^\e )   
	-\Jop^\rho (H_{{\rm eff}}  u_s^\e ) \, )
	\nonumber\\
	&=:&\Dop^1(t) \, + \, \Dop^2(t) \, ,
\eeqn
where  $u_s^\e := e^{-isH_{{\rm eff},\sigma}} \, u_0^\e$ and 
\eqn
	 \Dop^1(t) & := &
	-i \, e^{-it\He} \int_0^t ds \, e^{is\He} \, \big( \, \Hn-\Kn^\rho \, \big) \,  \Jop^\rho (\,  u_0^\e \, )
	 \,,
\eeqn
where we have used the cancellation of $V$ in $\Hn^V-\Kn^V=\Hn-\Kn^\rho$, and 
\eqn
	\Dop^2(t) & := &
	-i \, e^{-it\He} \int_0^t ds \, e^{is\He} \, \big( \,
	\Kn^V \Jop^\rho( u_s^\e )   -\Jop^\rho (H_{{\rm eff}}  u_s^\e )
	\, \big).
	\nonumber
\eeqn
The first term on the r.h.s. of \eqref{eq-maindiffterm-1} accounts for the radiation of infrared photons,
while the second term accounts for the influence of the external potential $\Ve$
on the  full QED dynamics  
$ \Psi(t) \, = \, e^{-it \He}\Jop^\rho (u_0^\e)  $, as compared to  the
effective Schr\"odinger evolution $e^{-itH_{{\rm eff},\sigma}}u_0^\e$.

Using the direct integral decomposition, we obtain
\eqn\label{eq-phi1-aux-0-1}
	\Dop^1(t)  \, = \, 
	-i \, (2\pi)^{-\frac32} \, e^{-it\He} \int_0^t ds \, e^{is\He} \,
	\int_{\cS} dp \, \widehat u_s^\e(p) \, e^{i(p-\Pf)x} (\Hn-\Kn^\rho)(p) 
	\, \id_{\cSd}(p) \, \Phsig^\rho(p)
	\,,
\eeqn
so that
\eqn\label{eq-phi1-aux-0-2}
	\|\Dop^1(t)\|_{\cH}&\leq&
	\sup_{p\in\cS} 	\Big\{	\frac{1}{|p| } \, \| (\Hn-\Kn^\rho)(p) \, \Phsig^\rho(p)\|_{\Fo} \, \Big\}
	 \, \int_0^t \| \,\n  u_s^\e  \, \|_{L^2(\R^3)} \, ds \,.
\eeqn
We note that thanks to  $\id_{\cSd}$ in \eqref{eq-phi1-aux-0-1},
which cuts off  the tail of $u_s^\e$ outside of $\cS_\mu$,  
the supremum in \eqref{eq-phi1-aux-0-2} can be taken only for $p\in\cS_\mu$, respectively, $\cS$.

In Appendix \ref{ssec-Dop2-2}
we prove the
following
key result:
\eqn\label{HK-est}
	\sup_{p\in\cS}
	\Big\{\frac{1}{|p| } \, \| (\Hn-\Kn^\rho)(p) \, \Phsig^\rho(p)\|_{\Fo} \, \Big\}
	\, \leq \, C  \alpha^{\frac12} \, \rho^{\frac12} \,,
\eeqn
{\em uniformly in} $\sigma\geq0$.
Furthermore,
we have the estimate
\eqn\label{eq-Schrod-mainbd-1}
	\int_0^t   \,
	\| \,   \nabla  u_s^\e  \, \|_{L^2(\R^3)} \, ds
	\, \leq \, C \,  \, t \, (\e^\kappa +\e t ) \, ,
\eeqn
as shown below in \eqref{eq-nabluL2-est-0,9}--\eqref{eq-nabluL2-est-1}, by using the condition $\|\nabla u_0^\e \|_{L^2(\R^3)} \leq \e^\kappa$ on  $ u_0^\e$, and
the fact that the potential $V$ satisfies \eqref{eq-Vpot-hyp-0}. We obtain
\eqn\label{phi1-est}
	\|\Dop^1(t)\|_{\cH} \, \le \, C \,  t \, (\e^\kappa +\e t )  \, \alpha^{\frac12} \, \rho^{\frac12}  ,
\eeqn
which yields the second contribution to the r.h.s.\ of \eqref{eq-main-thm-difference-0-3}.

For the second term on the r.h.s. of \eqref{eq-maindiffterm-1}, using the fiber decomposition and the equation
$\Kn^\rho(p) \, \Phsig^\rho(p) \, = \, \Eg(p) \, \Phsig^\rho(p) $, we have that
\eqn\label{phi2}
	\Dop^2(t) 
	&=&
	-i \, e^{-it\He} \int_0^t ds \, e^{is\He} \, \big( \, \Ve \Jop^\rho( u_s^\e )   -\Jop^\rho (\Ve  u_s^\e ) \, \big) \,.
\eeqn
Let $\| \Phi\|_{C^\theta(\cS)}:=\sup_{p,q\in\cS}\frac{\| \, \Phi(p)-\Phi(q) \, \|}{|p-q|^\holder }$.

In \eqref{psis}--\eqref{eq-Duppbd-2,1} below, we prove  an estimate of the form
\eqn\label{phi2-est}
	\|\Dop^2(t)\|_{\cH}
	&\leq&t \, C  \, \|\widehat{|\nabla|^\Holdexp\Ve}\|_{L^1(\R^3)} \,
	(1 \, + \, \| \Phsig^\rho\|_{C^\theta(\cS)} ) \,,
\eeqn
for $\theta<\frac23$. The key point here is
that the $\theta$-H\"older continuity of the fiber ground state $\Phsig^\rho(p)$
enables us to gain a $\theta$ derivative of the potential, yielding
$\|\widehat{|\nabla|^\Holdexp\Ve}\|_{L^1(\R^3)} \leq C \e^\theta$.
Using  the $\holder $-H\"older continuity of $\Phsig^\rho(\,\cdot\,)$,
which holds uniformly in $\sigma$, with $0 < \sigma < \rho$,
and the fact that
\eqn\label{eq-Vpot-hyp-1}
	\|\widehat{|\nabla|^\Holdexp V}\|_{L^1(\R^3)}
	\, \le \, \gamma \, ,\quad
	\textrm{where}
	\quad	\gamma \,:= \, \|\widehat V(k)\|_{L^1} \, < \, \infty \,,
\eeqn  (see \eqref{eq-Vpot-hyp-0}) and using $\| \Phsig^\rho\|_{C^\theta(\cS)}\leq C_\delta \, (1+ \ln(\rho^{-1}))$,
which we prove in Proposition \ref{prop:Holder},  we arrive at
\eqn\label{phi2-est-fin}
	\|\Dop^2(t)\|_{\cH} &\leq& C_\delta  \,t \, \e^\theta \, (1+ \ln(\rho^{-1})),
\eeqn
which yields the first term on the RHS of \eqref{eq-main-thm-difference-0-3}.
\end{proof}


\prf[Proof of   \eqref{eq-Schrod-mainbd-1}]
To verify \eqref{eq-Schrod-mainbd-1},
a simple calculation shows that
\eqn \label{eq-nabluL2-est-0,9}
	\nabla  u_s^\e 
	&=& e^{- i s \Heff } \, \nabla u_0^\e \, - i  \, \int_0^s \, d v \, e^{ - i v \Heff } 
	\, \nabla V_\e(x)  \, e^{ - i ( s - v ) \Heff } \, u_s^\e.
\eeqn
Using that $\| \nabla u_0^\e \|_{L^2} \le \e^\kappa$, and that
\eqn
\| \nabla V_\e \|_{L^\infty}
\, = \, \|\widehat{\nabla \Ve}\|_{L^1}
	\, \leq \, \gamma \, \e \,,
\eeqn
we conclude that
\eqn\label{eq-nabluL2-est-1}
	\| \, \nabla   u_s^\e  \,\|_{L^2}
	&\leq&C \,   (\e^\kappa +\e s )  \,,
\eeqn
and thus,
  \eqref{eq-Schrod-mainbd-1}.
\endprf


\begin{proof}[Proof of \eqref{phi2-est}]
 In what follows we use the notation 
 $$(U \Psi)(p) = \widehat\Psi(p)\  \quad \mbox{and}\  \quad (U^{-1}\Phi)(x) = \Phi^\vee(x) .$$  
 We define
\eqn\label{psis}
	\Aop_s & := &
	\Ve \Jop^\rho(  u_s^\e  )  - \Jop^\rho  ( \Ve  u_s^\e ). 	
\eeqn	
Using the definition of $\Jop^\rho$ and computing the Fourier transform, we find that
\eqn	\widehat\Aop_s (p)&=& (2\pi)^{-3/2} 
	\int_{\R^3} dq \,  \widehat\Ve(p-q) \, \widehat u(s,q) \,
	\big( \, \id_{\cSd}(q) \Phsig^\rho(q) \, - \, \id_{\cSd}(p)  \Phsig^\rho(p) \, \big).
\eeqn
By  relations \eqref{phi2} and \eqref{psis} and the unitarity of the generalized Fourier transform 
we have that
\eqn\label{eq-Duppbd-2}
	\|\Dop^2(t)\|_{\cH}  \leq \int_0^t ds \, \|  \Aop_s\|_{L^2_x\otimes\Fo}
	 = \int_0^t ds \, \|  \widehat{\Aop_s}\|_{L^2_p\otimes\Fo}.
\eeqn
It is important to note that, for any function $f\in L^2(\R^3)$ with ${\rm supp}(f)\subset \cSd$,
\eqn\label{eq-hatVe-supp-1}
	{\rm supp}(\widehat\Ve*f) \, \subset \, \cS ,
\eeqn	
for $\e \le \mu/3$, since we are assuming $\supp(\widehat V)\subset \{ k | |k|Ê\le 1 \}$,
so that $\supp(\widehat \Ve)\subset \{ k | | k | \le \e \}$.
 Since the term in the integrand given by $( \widehat u_s^\e   \id_{\cSd}  \Phsig^\rho)(q)$
is supported in $q\in\cSd$,
so that, by \eqref{eq-hatVe-supp-1}, its convolution with $\widehat \Ve$ has support in $\cS$,
we find
\eqn\label{psisFT-bnd}
	\widehat\Aop_s (p)&=& (2\pi)^{-3/2} 	\1_{\cS}(p)\int_{\R^3} dq \,  \widehat\Ve(p-q) \, \widehat u(s,q)
	\,  (\id_{\cSd}(q)  \,   \Phsig^\rho(q)- \id_{\cSd}(p)  \Phsig^\rho(p))  \, ,
\eeqn
for  $\e \le \mu/3$, where $\1_{\cS}$ is the characteristic function of the set $\cS$.
Inserting $|p-q|^\holder |p-q|^{-\holder }=\1$ into \eqref{psisFT-bnd},
using the definition of $|\nabla|^\Holdexp$ by its Fourier transform
and using that, since $\id_{\cSd}$ is a smooth function,
\eqn
	\sup_{p,q\in\cS}|p-q|^{-\Holdexp}
	\| \big( \,
	\id_{\cSd}(q) \Phsig^\rho(q) \, - \,
	\id_{\cSd}(p)  \Phsig^\rho(p) \, \big)\|_{ \Fo}\le C(1 \, + \, \| \Phsig^\rho\|_{C^\theta(\cS)} ) \,,
\eeqn
we obtain the bound
$\|\hat\Aop_s \|_{L^2_x\otimes\Fo} \le \, C(1 \, + \, \| \Phsig^\rho\|_{C^\theta(\cS)} )\| \,
|\1_{\cS} \widehat u_s^\e |* |\widehat{|\nabla|^\Holdexp\Ve}| \, \|_{L^2(\cS)}$. Next, using  Young's
inequality, $\|f*g\|_{L^r}\leq\|f\|_{L^1}\|g\|_{L^r}$, we find that
\eqn\label{eq-Duppbd-2'}	
	\|\widehat\Aop_s \|_{L^2_x\otimes\Fo}\leq \, C(1 \, + \, \| \Phsig^\rho\|_{C^\theta(\cS)} )
	\, \|\widehat{|\nabla|^\Holdexp\Ve}\|_{L^1(\R^3)} \,
	\sup_{s\in[0,t]}\|\1_{\cS} \widehat u_s^\e \|_{L^2(\R^3)}.
\eeqn
Finally, observing that
\eqn\label{eq-Duppbd-2,1}
	\|\1_{\cS} \widehat u_s^\e \|_{L^2(\R^3)}
\, \leq \,	\| \widehat u_s^\e \|_{L^2(\R^3)}
=	\|   u_s^\e \|_{L^2(\R^3)}
	\, = \,  \|  u_0^\e \|_{L^2(\R^3)} \, = \, 1 \,,
\eeqn
by unitarity of $e^{-it\Heff}$, and using \eqref{eq-Duppbd-2}, we arrive at \eqref{phi2-est}.
\end{proof}


\section{The limit $\sigma\searrow0$}
\label{sec-IRlim-1}

In this section we remove the infrared cut-off from the evolution.
\begin{proposition}\label{prop-s-lim}
Under the conditions of Theorem \ref{thm-main-1-1},
the strong limits
\eqn
	\label{eq-main-thm-difference-1-1}
	s-\lim_{\sigma\searrow0} e^{-it\He} \,\Jop^\rho (\,  u_0^\e \, ) \, = \,
	 e^{-itH^V} \,\JopN^\rho (\,  u_0^\e \, )
\eeqn
and
\eqn
	\label{eq-main-thm-difference-1-2}
	s-\lim_{\sigma\searrow0}  \Jop^\rho (\,   e^{-it\Heff} \, u_0^\e \, )  \, = \,
	 \JopN^\rho (\,   e^{-it H_{\rm eff}} \, u_0^\e \, )
\eeqn
exist, for arbitrary $|t|<\infty$.
\end{proposition}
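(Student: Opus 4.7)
The strategy is to factor each of the two strong limits into two independent ingredients: strong convergence of the dressing maps $\Jop^\rho \to \JopN^\rho$ as bounded operators $L^2(\R^3) \to \cH$, and strong convergence of the relevant unitary propagators as $\sigma \searrow 0$. A triangle-type decomposition combined with the isometry of the unitary groups will then reduce both assertions to these two ingredients.

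The map convergence is immediate from Proposition \ref{prp-Phsig-Holder-1}(1). Using the unitarity of $U$ applied to the definition \eqref{eq:defJrhosigma}, I obtain the identity
\begin{equation*}
	\|\Jop^\rho(v) - \JopN^\rho(v)\|_\cH^2 \, = \, \int_{\R^3} |\widehat v(p)|^2 \, \id_{\cSd}(p)^2 \, \|\Phsig^\rho(p) - \PhiN^\rho(p)\|_{\Fo}^2 \, dp,
\end{equation*}
valid for all $v \in L^2(\R^3)$. Since each $W^{\sigma,\rho}_{\nablE(p)}$ is a unitary Weyl operator, $\|\Phsig^\rho(p)\|_\Fo = 1$, so the integrand is dominated uniformly in $\sigma$ by the $L^1$-function $4|\widehat v(p)|^2 \id_{\cSd}(p)^2$ and converges pointwise to $0$. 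Dominated convergence then gives $\Jop^\rho(v) \to \JopN^\rho(v)$ in $\cH$ for every fixed $v$, and the same identity yields the uniform operator bound $\|\Jop^\rho\|_{\rm op}, \|\JopN^\rho\|_{\rm op} \leq 1$.

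For \eqref{eq-main-thm-difference-1-2} I additionally invoke Proposition \ref{prp-En-properties}(3) — $\Eeff \to E$ in $C^2(\cS)$ — to deduce, via the Fourier multiplier representation and the Trotter-Kato theorem, that $e^{-it\Heff}u_0^\e \to e^{-itH_{\rm eff}}u_0^\e$ strongly in $L^2(\R^3)$. The splitting
\begin{equation*}
	\Jop^\rho(e^{-it\Heff}u_0^\e) - \JopN^\rho(e^{-itH_{\rm eff}}u_0^\e) \, = \, [\Jop^\rho - \JopN^\rho](e^{-it\Heff}u_0^\e) \, + \, \JopN^\rho\bigl((e^{-it\Heff} - e^{-itH_{\rm eff}})u_0^\e\bigr)
\end{equation*}
then produces \eqref{eq-main-thm-difference-1-2}: the first term vanishes by inserting $e^{-itH_{\rm eff}}u_0^\e$ and combining the pointwise map convergence with the uniform operator bound, and the second vanishes by boundedness of $\JopN^\rho$.

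The main obstacle lies in \eqref{eq-main-thm-difference-1-1}. After the decomposition
\begin{equation*}
	e^{-itH^V_\sigma}\Jop^\rho(u_0^\e) - e^{-itH^V}\JopN^\rho(u_0^\e) \, = \, e^{-itH^V_\sigma}[\Jop^\rho - \JopN^\rho](u_0^\e) \, + \, [e^{-itH^V_\sigma} - e^{-itH^V}]\JopN^\rho(u_0^\e),
\end{equation*}
the first summand is controlled by the isometry of $e^{-itH^V_\sigma}$ together with the map convergence, and the remaining task is to show $e^{-itH^V_\sigma}\Psi \to e^{-itH^V}\Psi$ strongly with $\Psi := \JopN^\rho(u_0^\e)$. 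I would establish this via strong resolvent convergence $H^V_\sigma \to H^V$. Since $\Ve$ cancels in the difference,
\begin{equation*}
	H^V - H^V_\sigma \, = \, \tfrac{\sqrt{\alpha}}{2}\bigl[(-i\nabla_x + \sqrt{\alpha}\Af)\cdot(\A - \Af) + (\A - \Af)\cdot(-i\nabla_x + \sqrt{\alpha}\Af)\bigr] \, + \, \tfrac{\alpha}{2}(\A - \Af)^2,
\end{equation*}
and each factor of $\A - \Af$ involves only photon modes with $|k| < \sigma$. On a core of smooth vectors of finite photon number and finite $H_f$-energy, the standard form-factor estimate $\|(\A - \Af)\varphi\| \leq C\bigl(\int_{|k|<\sigma}|k|^{-1}dk\bigr)^{1/2}\|(N+1)^{1/2}\varphi\|$, together with $H_f$-boundedness of $\Af$, yields $(H^V - H^V_\sigma)\varphi \to 0$ as $\sigma \searrow 0$; strong resolvent convergence and hence, by Trotter-Kato, strong convergence of the unitary groups follow. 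The delicate step will be verifying that the required photon-number and energy estimates can be transferred to the dressed state $\Psi$, which carries an infrared cloud of soft photons produced by the Weyl transformation \eqref{eq-Weylop-def-1}; this has to be extracted, uniformly in $\sigma$, from the construction and regularity bounds underlying Proposition \ref{prp-Phsig-Holder-1}.
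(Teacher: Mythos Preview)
Your decomposition and overall strategy coincide with the paper's: split each limit into a ``map'' piece $(\Jop^\rho-\JopN^\rho)(v)$ and a ``propagator'' piece, and control the former via the convergence $\Phsig^\rho(p)\to\Phi^\rho(p)$. The paper uses the uniform-in-$p$ rate from Corollary~\ref{cor:convergence_GS} rather than dominated convergence, but your version is equally valid. Your treatment of \eqref{eq-main-thm-difference-1-2} is in fact more explicit than the paper's, which focuses on \eqref{eq-main-thm-difference-1-1}.

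For the propagator piece in \eqref{eq-main-thm-difference-1-1} the paper proceeds slightly differently: it proves \emph{norm} resolvent convergence $H^V_\sigma\to H^V$. Writing $Q_\sigma:=H^V-H^V_\sigma=\alpha^{1/2}A_{<\sigma}(x)\cdot v_\sigma+\tfrac{\alpha}{2}A_{<\sigma}(x)^2$ with $v_\sigma=-i\nabla_x+\alpha^{1/2}\Af(x)$, the standard $H_f$-relative bounds of Lemma~\ref{lm:b1}, together with the operator boundedness of $(H_f+1)(H^V+i)^{-1}$ and $v_\sigma^2(H^V+i)^{-1}$, give directly $\|Q_\sigma(H^V+i)^{-1}\|\leq C\alpha^{1/2}\sigma^{1/2}$. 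This is no harder than your core argument and yields a quantitative rate.

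Your final paragraph, however, contains a genuine misconception that you should correct. Once strong (or norm) resolvent convergence of the self-adjoint operators $H^V_\sigma\to H^V$ is established, the standard theorem (Reed--Simon~VIII.21, which you invoke under the name Trotter--Kato) gives $e^{-itH^V_\sigma}\to e^{-itH^V}$ strongly on \emph{all} of $\cH$, not merely on the core used to verify resolvent convergence. There is therefore nothing to check about $\Psi=\JopN^\rho(u_0^\e)$: no photon-number, energy, or regularity information on $\Psi$ is needed, and the infrared cloud it carries is irrelevant at this step. Your argument is already complete at the sentence ``\ldots strong convergence of the unitary groups follow''; the ``delicate step'' you worry about afterwards does not exist.
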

\begin{proof} We write
\begin{align}
e^{-itH_\sigma^V} \mathcal{J}_\sigma^\rho ( u_0^\epsilon ) - e^{-itH^V} \mathcal{J}_0^\rho ( u_0^\epsilon ) \, = \, ( e^{-itH_\sigma^V} - e^{-it H^V} ) \mathcal{J}_0^\rho ( u_0^\epsilon ) + e^{-itH_\sigma^V} ( \mathcal{J}_\sigma^\rho - \mathcal{J}_0^\rho ) ( u_0^\epsilon ). \label{eq:a1}
\end{align}
Clearly,
\begin{align*}
\left \| e^{-itH_\sigma^V} ( \mathcal{J}_\sigma^\rho - \mathcal{J}_0^\rho ) ( u_0^\epsilon ) \right \| \, = \, \left \| ( \mathcal{J}_\sigma^\rho - \mathcal{J}_0^\rho ) ( u_0^\epsilon ) \right \| \, \le \, \| u_0^\epsilon \|_{L^2} \sup_{p \in \mathcal{S}_\mu} \left \| \Phi_\sigma^\rho(p) - \Phi^\rho(p) \right \|_{ \mathcal{F} }.
\end{align*}
Thus,
\begin{align*}
\lim_{\sigma \searrow 0} \left \| e^{-itH_\sigma^V} ( \mathcal{J}_\sigma^\rho - \mathcal{J}_0^\rho ) ( u_0^\epsilon ) \right \| \, = \, 0 \, ,
\end{align*}
follows from Proposition \ref{prop:convergence_GS}.

Next, we discuss the first term on the right side of \eqref{eq:a1}. In order to prove that it converges to $0$, as $\sigma \searrow 0$, it suffices to show that $H_\sigma^V$ converges to $H^V$ in the norm resolvent sense; (see \cite[Theorem VIII.21]{rs1}), i.e.,
\begin{align*}
\lim_{ \sigma \searrow 0} \left \| ( H^V_\sigma + i )^{-1} - (H^V + i)^{-1} \right \| \, = \, 0.
\end{align*}
From the second resolvent equation and the fact that $\| (H_\sigma^V + i)^{-1} \| \, \le \, 1$,
it follows that
\begin{align}
	\left \| ( H^V_\sigma + i )^{-1} - (H^V + i)^{-1} \right \| \,
	= \, \left \| ( H^V + i )^{-1} \, Q_\sigma \, (H^V_\sigma + i)^{-1} \right \| \,,
	\label{eq:a2}
\end{align}
where
\begin{align*}
	Q_\sigma \, :=& \, H^V - H^V_\sigma \, =\, \alpha^{\frac 12}
	A_{<\sigma}(x) \cdot v_\sigma + \frac \alpha2 ( A_{<\sigma}(x) )^2 \,,
\end{align*}
and
\begin{align*}
 	v_\sigma \, := \,
	- i \nabla_x + \alpha^{\frac 12} A_{\sigma}(x)
\end{align*}
is the velocity operator.
Here $A_\sigma(x)$ is defined in \eqref{eq:defAf}, and
\eqn\label{eq-Af-def-0-1}
	A_{<\sigma}(x) \, := \,
	\sum_{\lambda}\int_{|k| \le \sigma}
	\, \frac{dk}{|k|^{1/2}} \,
	\{ \,  \e_\lambda(k) \,  e^{-ikx} \otimes a_\lambda(k) \, + \, h.c. \, \} \, .
\eeqn
In order to estimate the norm of $Q_\sigma (H^V+i)^{-1}$, we use the following well-known lemma.
\begin{lemma}\label{lm:b1}
Let $f,g \in \mathrm{L}^2 ( \mathbb{R}^3 \times \{ + , - \} ; \mathcal{B} ( \mathcal{H}_{ \mathrm{el} } ) )$ be operator-valued functions such that $\| ( 1 + |k|^{-1} )^{1/2} f \|, \| ( 1 + |k|^{-1} )^{1/2} g \| < \infty$. Then
\begin{align}
& \| a^\#(f) ( H_f + 1 )^{-\frac 12} \| 
\, \le \, 
\| ( 1 + |k|^{-1} )^{\frac 12} f \|_{L^2}, \label{lm:b1_1} 
\\
& \| a^\#(f) a^\#(g) ( H_f + 1 )^{-1} \| 
\, \le \, 
\| (1 + |k|^{-1})^{\frac 12} f \|_{L^2} \: 
\| (1 + |k|^{-1})^{\frac 12} g \|_{L^2}, \label{lm:b1_2}
\end{align}
where $a^\#$ stands for $a$ or $a^*$.
\end{lemma}
In particular, using the Kato-Rellich theorem, one easily shows that, for $\alpha$ small enough, $D( H^V ) = D( -\Delta_x \otimes I  + I \otimes H_f ) \subset D( H_f )$. Thus, we have that
\begin{align*}
\left \|(H_f + 1) (H^V+i)^{-1} \right \| \, \le \, C,
\end{align*}
which when combined with Lemma \ref{lm:b1} yields
\begin{align}\label{eq:a3}
\left \|\frac \alpha2 ( A_{<\sigma}(x) )^2 (H^V+i)^{-1}  \right \| \, \le \, C \, \alpha \, \sigma.
\end{align}
Likewise one verifies that
\begin{align}
	& \left \| \alpha^{\frac 12} A_{<\sigma}(x) \cdot
	v_\sigma
	(H^V+i)^{-1}  \right \| \, \le \, C \, \alpha^{\frac12} \, \sigma^{\frac12}, \label{eq:a4}
\end{align}
since $0\leq v_\sigma^2\leq H^V+\|V\|_{L^\infty}$ is bounded relative to $H^V$.
Estimates \eqref{eq:a3} and \eqref{eq:a4} yield
\begin{align*}
\left \|Q_\sigma ( H^V + i )^{-1} \right \| \, \le \, C \, \alpha^{\frac 12} \, \sigma^{\frac 12}.
\end{align*}
By \eqref{eq:a2}, we have shown that $H_\sigma^V$ converges to $H^V$, as $\sigma \searrow 0$, in the norm resolvent sense.
\end{proof}


\section{Proof of Theorem \ref{thm-main-1-0}}

In this section, we prove the bound in Theorem \ref{thm-main-1-0},
which compares the full dynamics to the effective dynamics for the system without
infrared cutoff.
We have that
\eqn
	\lefteqn{
	\| \, e^{-it H^V} \,\JopN^\rho (\,  u_0^\e \, )
	\, - \,\JopN^\rho (\,  e^{-itH_{{\rm eff}}} \, u_0^\e \, ) \, \|_{\cH}
	}
	\nonumber\\
	&\leq&
	\| \, e^{-it\He} \,\Jop^\rho (\,  u_0^\e \, )
	\, - \, \Jop^\rho (\,   e^{-it\Heff} \, u_0^\e \, ) \, \|_{\cH}
	\nonumber\\
	&&+\,
	\| \, e^{-it\He} \,\Jop^\rho (\,  u_0^\e \, ) \, - \,
	 e^{-itH^V} \,\JopN^\rho (\,  u_0^\e \, ) \, \|_{\cH}
	 \nonumber\\
	 &&+ \,
	 \| \,   \Jop^\rho (\,   e^{-it\Heff} \, u_0^\e \, )  \, - \,
	 \JopN^\rho (\,   e^{-it H_{\rm eff}} \, u_0^\e \, ) \, \|_{\cH} \,,
\eeqn
for any $t$ and $0 < \sigma < \rho \le 1$.
It follows from Theorem \ref{thm-main-1-1}  that the first term on the r.s. of the
inequality sign is bounded by $C_\delta \, (1+\ln(\rho^{-1})) \,  \e^{\frac23-\delta} \, t + C \, \alpha^{\frac12} \, \rho^{\frac12} \,   t \, (\e^\kappa +\e t ) $,
uniformly in $\sigma>0$.

From  Proposition \ref{prop-s-lim}, it follows that the second and third term
on the r.s. converge to zero, as $\sigma\searrow0$.
By taking $\sigma$ to zero, we thus conclude that
\eqn
	\| \, e^{-it H^V} \,\JopN^\rho (\,  u_0^\e \, )
	\, - \,\JopN^\rho (\,  e^{-itH_{{\rm eff}}} \, u_0^\e \, ) \, \|_{\cH}
	\, \leq \, C_\delta \, (1+\ln(\rho^{-1})) \,  \e^{\frac23-\delta} \, t +
	C \, \alpha^{\frac12} \, \rho^{\frac12} \,   t \, (\e^\kappa +\e t )  \, .
\eeqn
Due to our choice  $\rho = \e^{\frac23-\delta}$, this concludes the proof of Theorem \ref{thm-main-1-0}. We note that in the inequality \eqref{eq-main-thm-difference-0-1}, 
the logarithmic term $\ln(\rho_\e^{-1})$ has been 
absorbed by an arbitrary small shift of $\delta$, which we do not
keep track of notationally.
\qed

 $\;$ \\

\section{H\"older continuity of the ground state}
\label{sec-HolderContin-1}

We recall that $\Phi_\sigma^\rho(p)$ denotes a normalized ground state of the Bogoliubov transformed fiber Hamiltonian $K_\sigma^\rho(p) = W_{\nablE(p)}^\rho \, \Hn(p) \, (W_{\nablE(p)}^\rho)^*$, with infrared cutoff $\sigma>0$ (see \eqref{eq:Kn}). Our aim in this appendix is to prove that, for a suitable choice of the vectors $\Phi_\sigma^\rho(p)$, the map $p \mapsto \Phi_\sigma^\rho(p)$ is $\holder$-H\"older continuous, for $\holder < 2/3$.

For $\rho=1$, we set
\eqn\label{PhiK}
\Phi_\sigma(p) := \Phi_\sigma^1(p), \qquad K_\sigma(p) := K_\sigma^1(p).
\eeqn
We remark that
\eqn
K_\sigma^\rho(p) \, = \, \big ( W_{\nabla E_\sigma(p)}^{\rho,1} \big )^* \, K_\sigma(p) \, W_{\nabla E_\sigma(p)}^{\rho,1} \, , \qquad \Phi_\sigma^\rho(p) \,  = \, \big ( W_{\nabla E_\sigma(p)}^{\rho,1} \big )^* \,\Phi_\sigma(p) \, , \label{eq:dd3}
\eeqn
where we recall that $W_{\nablE(p)}^{\rho,1}$ is defined in \eqref{eq-Weylop-def-1}.

Letting
\eqn\label{Fs}
	\Fo_\sigma \, := \, \bigoplus_{n\geq0} \, \, {\rm Sym} 
	( \, L^2( \{ k \in \mathbb{R}^3 , |k| \ge \sigma \} \, \times \, \{+,-\} \, ) \, )^{\otimes n}
\eeqn
denote the Fock space of photons of energies $\ge \sigma$, and identifying $\Fo_\sigma$ with a subspace of $\Fo$, we observe that $K_\sigma(p)$ leaves $\Fo_\sigma$ invariant. Let $ \widetilde K_\sigma(p) $ denote the restriction of $K_\sigma(p)$ to $\Fo_\sigma$. An important property, proven in \cite{BFP,cfp2,frpi}, is that there is an energy gap of size $\eta \sigma$,  where  $\eta>0$  is uniform in $\sigma \searrow 0$ , in the spectrum of $ \widetilde K_\sigma(p) $ above the ground state energy $E_\sigma(p)$. Moreover, one can choose
\eqn
	\Phi_\sigma(p) \, = \,   \widetilde \Phi_\sigma(p) \otimes \Omega_{<\sigma}  \, , \label{eq:PhitildePhi}
\eeqn
in the representation $\Fo \simeq \Fo_\sigma \otimes \Fo_{<\sigma}$, where
\eqn
	\Fo_{<\sigma} \, := \, \bigoplus_{n\geq0} \, \, {\rm Sym} ( \, L^2( \{ k \in \mathbb{R}^3 , |k| \le \sigma \} \, \times \, \{+,-\} \, ) \, )^{\otimes n}  \, .
\eeqn

Now, let $\Omega_\sigma$ denote the vacuum sector in $\Fo_\sigma$ and $\widetilde \Pi_\sigma(p)$ be the rank-one projection onto the eigenspace associated with $E_\sigma(p) = \inf \mathrm{spec}( \tilde K_\sigma(p) )$. By \cite{cfp2,frpi}, 
\begin{align}\label{eq:|PiOmega|}
\| \widetilde \Pi_\sigma(p) \Omega_\sigma \| \, \ge \, \frac{1}{3},
\end{align}
for arbitrary $\sigma>0$ and $|p|\le1/3$ provided that $\alpha$ is chosen sufficiently small.
 Then $\widetilde \Phi_\sigma(p)$ can be chosen in the following way:
\begin{align}\label{eq:def_tildePhi}
\widetilde \Phi_\sigma(p) = \frac{ \widetilde \Pi_\sigma(p) \Omega_\sigma }{ \| \widetilde \Pi_\sigma(p) \Omega_\sigma \| }.
\end{align}

Let $N$ denote the number operator,
\eqn
	N \, = \, \sum_\lambda \int d k \, b^*_\lambda(k) \, b_\lambda(k) \, .
\eeqn
The following proposition has been proven in \cite{chfr,cfp2,frpi}.
\begin{proposition}\label{prop:convergence_GS}
For $\alpha \ll 1$ and $|p| \le 1/3$, there exists a  normalized  vector $\Phi(p)$ in the Fock space  $\Fo$  such that $\Phi_\sigma(p) \to \Phi(p)$, strongly, as $\sigma\to0$. The following bound holds,
\begin{align}
\| N^{\frac12} \Phi_\sigma(p) \| \, \le \, C \alpha^{\frac12} \, , \label{eq:boundN}
\end{align}
uniformly in $\sigma \ge 0$.
Moreover, For all $\delta>0$, there exists $\alpha_\delta>0$ and $C_\delta < \infty$ such that, for all $0 \le \alpha \le \alpha_\delta$, $0 \le \sigma' < \sigma \le 1$ and $|p| \le 1/3$,
\begin{align}
& \| \Phi_\sigma(p) - \Phi_{\sigma'}(p) \| \, \le \, C_\delta \, \alpha^{\frac{1}{4}} \, \sigma^{1-\delta}, \\
& | \nabla E_\sigma(p) - \nabla E_{\sigma'}(p) | \, \le \, C_\delta \, \alpha^{\frac{1}{4}} \, \sigma^{1-\delta}. \label{eq:bb1}
\end{align}
\end{proposition}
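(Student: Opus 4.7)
The proof rests on three ingredients developed in \cite{BFP, cfp2, frpi}: (a) a uniform spectral gap of size $\eta\sigma$, with $\eta>0$ independent of $\sigma$, above $E_\sigma(p)$ in the spectrum of $\widetilde K_\sigma(p)$; (b) the tensor-product structure $\Phi_\sigma(p) = \widetilde\Phi_\sigma(p) \otimes \Omega_{<\sigma}$ together with the lower bound \eqref{eq:|PiOmega|}; and (c) the cancellation of the leading soft-photon singularity of the coupling produced by the Weyl/Bogoliubov dressing $W_{\nabla E_\sigma(p)}^{\sigma,1}$ implicit in the definition of $K_\sigma(p)$.

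\textbf{Number bound \eqref{eq:boundN}.} I would apply the standard pull-through formula to the ground-state equation $\widetilde K_\sigma(p)\widetilde\Phi_\sigma(p) = E_\sigma(p)\widetilde\Phi_\sigma(p)$. Using the commutator identity $[b_\lambda(k), H_f] = |k|\,b_\lambda(k)$, one rewrites this as
\begin{equation*}
b_\lambda(k)\widetilde\Phi_\sigma(p) \,=\, -\bigl(\widetilde K_\sigma(p-k) - E_\sigma(p) + |k|\bigr)^{-1} G_{\sigma,\lambda}(k;p)\,\widetilde\Phi_\sigma(p), \qquad |k|\ge\sigma,
\end{equation*}
where $G_{\sigma,\lambda}(k;p)$ is the Bogoliubov-transformed form factor. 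The resolvent is bounded by $C/|k|$ by Proposition \ref{prp-En-properties}(ii) applied to $\widetilde K_\sigma(p-k)$, while the Weyl subtraction inside $G_{\sigma,\lambda}$ cancels the longitudinal $|k|^{-1/2}$ singularity along $\nabla E_\sigma(p)$, leaving a form factor that renders $\|b_\lambda(k)\widetilde\Phi_\sigma(p)\|^2$ integrable in $k$ uniformly in $\sigma$ and of total size $O(\alpha)$. Integrating over $|k|\le 1$ and summing over $\lambda$ yields $\|N^{1/2}\Phi_\sigma(p)\|\le C\alpha^{1/2}$.

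\textbf{Strong convergence and the rate $\alpha^{1/4}\sigma^{1-\delta}$.} For $0\le\sigma'<\sigma$, the uniform gap permits the contour representation
\begin{equation*}
\widetilde\Pi_\tau(p) \,=\, \frac{1}{2\pi i}\oint_{\gamma_\sigma}(z-\widetilde K_\tau(p))^{-1}\,dz, \qquad \tau\in\{\sigma,\sigma'\},
\end{equation*}
with $\gamma_\sigma$ a circle of radius $\eta\sigma/2$ around $E_\sigma(p)$; this contour also encircles $E_{\sigma'}(p)$ because Proposition \ref{prp-En-properties}(iii) implies $|E_\sigma(p)-E_{\sigma'}(p)|=o(\sigma)$ for $\alpha$ small. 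The second resolvent identity converts $\widetilde\Pi_\sigma(p) - \widetilde\Pi_{\sigma'}(p)$ into a contour integral of $(z-\widetilde K_\sigma)^{-1}(\widetilde K_\sigma - \widetilde K_{\sigma'})(z-\widetilde K_{\sigma'})^{-1}$. The shell difference $\widetilde K_\sigma(p) - \widetilde K_{\sigma'}(p)$, supported in $\sigma'\le |k|\le\sigma$, has norm of order $\alpha^{1/2}\sigma^{1/2}$ when sandwiched with $(H_f+1)^{-1/2}$ via Lemma \ref{lm:b1}. Combined with the $O(1/\sigma)$ resolvents along $\gamma_\sigma$ and the photon-number control from Step 1, a direct estimate yields only $O(\alpha^{1/2})$; the sharper bound $C_\delta\alpha^{1/4}\sigma^{1-\delta}$ is extracted by iterating the argument over dyadic infrared scales $\sigma_n = 2^{-n}\sigma_0$ and exploiting the Weyl cancellation at each scale, which is precisely the smooth Feshbach–Schur / renormalization-group induction of \cite{cfp2, frpi}. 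Summability over scales then produces the normalized strong limit $\Phi(p)$ through \eqref{eq:|PiOmega|} and \eqref{eq:def_tildePhi}.

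\textbf{Gradient convergence \eqref{eq:bb1} and main obstacle.} I would apply the Hellmann–Feynman identity $\nabla E_\tau(p) = \langle\Phi_\tau(p), (\nabla_p K_\tau(p))\Phi_\tau(p)\rangle$ for $\tau\in\{\sigma,\sigma'\}$. Differentiating $K_\tau(p) = W_{\nabla E_\tau(p)}^{\tau,1} H_\tau(p) (W_{\nabla E_\tau(p)}^{\tau,1})^*$ in $p$ produces the explicit derivative of $H_\tau(p)$ plus commutators with $\nabla_p W$, the latter bounded uniformly via the control on $\partial_{|p|}^2 E_\sigma$ from Proposition \ref{prp-En-properties}(ii). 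The difference $\nabla E_\sigma(p) - \nabla E_{\sigma'}(p)$ then decomposes into a piece linear in $\Phi_\sigma(p)-\Phi_{\sigma'}(p)$, controlled by the previous step, and a shell piece controlled by Lemma \ref{lm:b1}-type estimates, each of size $C_\delta\alpha^{1/4}\sigma^{1-\delta}$. The main technical obstacle is \emph{not} any of these perturbative estimates but rather the establishment of the uniform gap $\eta\sigma$ together with the induction that propagates the soft-photon cancellation provided by the Weyl dressing across infrared scales; this deep input is exactly what \cite{BFP, cfp2, frpi} supply, and conditional on it the bounds above follow by the scheme just described.
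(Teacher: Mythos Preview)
The paper does not give its own proof of this proposition: it is stated as a result quoted from \cite{chfr,cfp2,frpi} (see the sentence immediately preceding the proposition). So there is no ``paper's own proof'' to compare against; the proposition functions as an input from the literature.

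That said, your outline is broadly faithful to the strategy actually used in those references. The pull-through argument for \eqref{eq:boundN}, the role of the Weyl dressing in cancelling the leading $|k|^{-3/2}$ singularity of the one-photon wave function, the Feynman--Hellmann identity for $\nabla E_\sigma$, and above all your identification of the uniform gap plus the multiscale (Pizzo-type) induction as the genuine technical core are all correct. One point to tighten in Step~2: the operators $\widetilde K_\sigma(p)$ and $\widetilde K_{\sigma'}(p)$ act on different Fock spaces $\Fo_\sigma$ and $\Fo_{\sigma'}$, and the gap of $\widetilde K_{\sigma'}(p)$ is only $\eta\sigma'$, not $\eta\sigma$; so a single contour $\gamma_\sigma$ of radius $\eta\sigma/2$ does not directly serve both projections. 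In the cited works this is handled by first comparing $\widetilde\Phi_\sigma(p)$ to an intermediate ground state at scale $\sigma$ but with form factor cut at $\sigma'$ (using the tensor-product structure $\Fo_{\sigma'}\simeq\Fo_\sigma\otimes\Fo_{[\sigma',\sigma)}$ and the gap at the \emph{larger} scale $\sigma$), and only then descending. Your dyadic iteration remark is the right mechanism, but the single-contour formulation as written would not go through. Apart from this, your sketch accurately captures what is being imported.
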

As a consequence, we show the following corollary.
\begin{corollary}\label{cor:convergence_GS}
Let $0 < \rho \le 1$. For all $\delta>0$, there exists $0<\alpha_\delta\ll1$ such that, for all $0 \le \alpha \le \alpha_\delta$ and $|p| \le 1/3$, there exists a vector $\Phi^\rho(p)$ in the Fock space such that $\Phi_\sigma^\rho(p) \to \Phi^\rho(p)$, strongly, as $\sigma\to0$. Moreover, there exists a constant $C_\delta < \infty$ such that, for all $0 \le \alpha \le \alpha_\delta$, $0 \le \sigma' < \sigma \le 1$ and $|p| \le 1/3$,
\begin{align}
& \| \Phi_\sigma^\rho(p) - \Phi_{\sigma'}^\rho(p) \| \, \le \, C_\delta \, \alpha^{\frac{1}{4}} \, \sigma^{1-\delta} \, (1 + \alpha^{\frac12} \ln(\rho^{-1}) ) \, .
\end{align}
\end{corollary}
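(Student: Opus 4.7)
The plan is to exploit the identity $\Phsig^\rho(p) = (W_{\nabla E_\sigma(p)}^{\rho,1})^* \Phsig(p)$ from \eqref{eq:dd3}, and thereby reduce the problem to two ingredients already controlled by Proposition \ref{prop:convergence_GS}: the difference $\Phsig(p) - \Phi_{\sigma'}(p)$ and the difference $\nabla E_\sigma(p) - \nabla E_{\sigma'}(p)$. The key observation is that in $W_{\nabla E_\sigma(p)}^{\rho,1}$ the integration range $\rho \le |k| \le 1$ is independent of $\sigma$, so all $\sigma$-dependence of this Weyl operator flows through the parameter $\nabla E_\sigma(p)$.

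I would split
\begin{equation*}
\Phsig^\rho(p) - \Phi_{\sigma'}^\rho(p) \, = \, (W_{\nabla E_\sigma(p)}^{\rho,1})^* \bigl( \Phsig(p) - \Phi_{\sigma'}(p) \bigr) + \bigl( (W_{\nabla E_\sigma(p)}^{\rho,1})^* - (W_{\nabla E_{\sigma'}(p)}^{\rho,1})^* \bigr) \Phi_{\sigma'}(p).
\end{equation*}
The first piece has norm at most $C_\delta \alpha^{1/4} \sigma^{1-\delta}$ by unitarity of the Weyl operator and Proposition \ref{prop:convergence_GS}. For the second piece, I would introduce the one-particle form factor
\begin{equation*}
f_\sigma(k,\lambda) \, := \, \alpha^{1/2} \frac{\nabla E_\sigma(p) \cdot \e_\lambda(k)}{|k|^{1/2}\bigl(|k| - \nabla E_\sigma(p) \cdot k\bigr)} \, \chi_{\rho \le |k| \le 1}(k),
\end{equation*}
interpolate with $f_t := f_{\sigma'} + t(f_\sigma - f_{\sigma'})$, and apply the standard Weyl-operator differentiation formula
\begin{equation*}
\tfrac{d}{dt} W(f_t) \, = \, W(f_t) \bigl( a^*(f_\sigma - f_{\sigma'}) - a(f_\sigma - f_{\sigma'}) \bigr) - \tfrac{i}{2} \mathrm{Im} \langle f_t, f_\sigma - f_{\sigma'}\rangle \, W(f_t),
\end{equation*}
combined with the unitarity of $W(f_t)$ and the uniform bound $\| (N+1)^{1/2} \Phi_{\sigma'}(p)\| \le C$ from \eqref{eq:boundN}, to reduce the estimate to $L^2$ bounds on $f_\sigma$ and $f_\sigma - f_{\sigma'}$.

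Those $L^2$ bounds are elementary: since $|k| - \nabla E_\sigma(p) \cdot k \ge |k|/2$ for $p \in \cS$ and $\alpha$ small by Proposition \ref{prp-En-properties}, integration of the $|k|^{-3}$ weight over $\rho \le |k| \le 1$ produces a logarithm, so that
\begin{equation*}
\|f_\sigma\|_{L^2} \, \le \, C \alpha^{1/2} (\ln \rho^{-1})^{1/2}, \qquad \|f_\sigma - f_{\sigma'}\|_{L^2} \, \le \, C \alpha^{1/2} \, |\nabla E_\sigma(p) - \nabla E_{\sigma'}(p)| \, (\ln \rho^{-1})^{1/2},
\end{equation*}
and the latter is $\le C_\delta \alpha^{3/4} \sigma^{1-\delta} (\ln \rho^{-1})^{1/2}$ by \eqref{eq:bb1}. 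Summing the two contributions and using $(\ln \rho^{-1})^{1/2} \le 1 + \ln \rho^{-1}$ together with $\alpha \le 1$, I arrive at the Cauchy estimate $\| \Phsig^\rho(p) - \Phi_{\sigma'}^\rho(p) \| \le C_\delta \alpha^{1/4} \sigma^{1-\delta} ( 1 + \alpha^{1/2} \ln \rho^{-1} )$. Strong convergence to a limit $\Phi^\rho(p)$ then follows from completeness of $\Fo$, and consistency with Proposition \ref{prop:convergence_GS} forces $\Phi^\rho(p) = (W_{\nabla E(p)}^{\rho,1})^*\Phi(p)$. The main obstacle, which the factor $1 + \alpha^{1/2} \ln \rho^{-1}$ already reflects, is the logarithmic infrared divergence of $\|f_\sigma\|_{L^2}$ as $\rho \searrow 0$, inherent to the non-Fock behaviour of the dressing transformation at vanishing photon mass; it is precisely why the bound cannot be made uniform in $\rho$.
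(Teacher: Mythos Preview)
Your proof is correct and follows essentially the same route as the paper: both use the identity \eqref{eq:dd3} to split $\Phi_\sigma^\rho(p)-\Phi_{\sigma'}^\rho(p)$ into a piece controlled by $\|\Phi_\sigma(p)-\Phi_{\sigma'}(p)\|$ and a piece controlled by the difference of the Weyl operators, and both reduce the latter to the $L^2$ bound on the form-factor difference combined with $\|(N+1)^{1/2}\Phi_\sigma(p)\|\le C$. The only technical difference is that the paper bounds the Weyl-operator difference via the spectral inequality $\|(\mathbf{1}-e^{iB})\Psi\|\le\|B\Psi\|$ applied to the product $W_{\nabla E_\sigma(p)}^{\rho,1}(W_{\nabla E_{\sigma'}(p)}^{\rho,1})^*$, whereas you interpolate along $f_t$ and integrate; the paper's device is marginally cleaner because $B(\rho)$ then acts directly on $\Phi_\sigma(p)$, so no commutation of $(N+1)^{1/2}$ through the Weyl operator is needed.
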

\begin{proof}
Using \eqref{eq:dd3}, we split
\eqn \label{eq:aa1}
\Phi_\sigma^\rho(p) - \Phi_{\sigma'}^\rho(p) \, = \, \Big ( \big ( W_{\nabla E_\sigma(p)}^{\rho,1} \big )^* - \big ( W_{\nabla E_{\sigma'}(p)}^{\rho,1} \big )^* \Big )  \Phi_\sigma(p) \, + \, \big ( W_{\nabla E_{\sigma'}(p)}^{\rho,1} \big )^* \big ( \Phi_\sigma(p) - \Phi_{\sigma'}(p) \big ). \label{eq:aa2}
\eeqn
By Proposition \ref{prop:convergence_GS} and unitarity of $W_{\nabla E_\sigma(p)}^{\rho,1}$, the second term is estimated as
\eqn
\Big \| \big ( W_{\nabla E_{\sigma'}(p)}^{\rho,1} \big )^* \big ( \Phi_\sigma(p) - \Phi_{\sigma'}(p) \big ) \Big \| \le C_\delta \, \alpha^{\frac{1}{4}} \, \sigma^{1-\delta} \, . \label{eq:aa3}
\eeqn
The first term in the right side of \eqref{eq:aa1} is estimated as
\begin{align}
 \Big \|Ê\Big ( \big ( W_{\nabla E_\sigma(p)}^{\rho,1} \big )^* - \big ( W_{\nabla E_{\sigma'}(p)}^{\rho,1} \big )^* \Big )  \Phi_\sigma(p) \Big \| & = \Big \|Ê\Big ( \mathbf{1} - W_{\nabla E_\sigma(p)}^{\rho,1} \big ( W_{\nabla E_{\sigma'}(p)}^{\rho,1} \big )^* \Big )  \Phi_\sigma(p) \Big \| \notag \\
& \le \big \| B(\rho) \Phi_\sigma(p) \big \| \, , \label{eq:aa4}
\end{align}
by unitarity of $W_{\nabla E_\sigma(p)}^{\rho,1}$ and the spectral theorem, where
\begin{align}
B(\rho) := \alpha^{\frac12}\sum_\lambda\int_{\rho \le |k| \le 1}
	dk \,
	\Big ( \frac{\nablE(p)\cdot\e_\lambda(k) \, b_\lambda(k) -h.c.}{|k|^{1/2}(|k|-\nablE(p)\cdot k)} - \frac{\nabla E_{\sigma'}(p)\cdot\e_\lambda(k) \, b_\lambda(k) -h.c.}{|k|^{1/2}(|k|-\nabla E_{\sigma'}(p)\cdot k)} \Big ) \, . \label{eq:defB(rho)}
\end{align}
To estimate $\| B(\rho) \Phi_\sigma(p) \|$, we use the well known fact that, for any $f \in \mathrm{L}^2 ( \mathbb{R}^3 \times \{ + , - \} )$,
\eqn
	 \| a^\#(f) ( N + 1 )^{-\frac 12} \| \, \le \, \sqrt{2} \| f \|_{L^2} \, . \label{eq:cc1}
\eeqn
Clearly,
\begin{align}
& \frac{\nablE(p)\cdot\e_\lambda(k)}{|k|^{1/2}(|k|-\nablE(p)\cdot k)} - \frac{\nabla E_{\sigma'}(p)\cdot\e_\lambda(k) }{|k|^{1/2}(|k|-\nabla E_{\sigma'}(p)\cdot k)} \notag \\
& = \, \frac{( \nablE(p) - \nabla E_{\sigma'}(p) ) \cdot\e_\lambda(k)}{|k|^{1/2}(|k|-\nablE(p)\cdot k)} + \frac{\nabla E_{\sigma'}(p)\cdot\e_\lambda(k) }{|k|^{1/2}(|k|-\nabla E_{\sigma}(p)\cdot k)} \frac{ ( \nabla E_\sigma(p) - \nabla E_{\sigma'}(p) ) \cdot k }{(|k|-\nabla E_{\sigma'}(p)\cdot k)} \, .
\end{align}
Hence, by \eqref{eq:bb1} and the facts that $| \nabla E_\sigma(p) | , | \nabla E_{\sigma'}(p) | \le 1/2$ for $\alpha$ small enough (see Proposition \ref{prp-En-properties} (2)), we obtain
\begin{align}
\Big | \frac{\nablE(p)\cdot\e_\lambda(k)}{|k|^{1/2}(|k|-\nablE(p)\cdot k)} - \frac{\nabla E_{\sigma'}(p)\cdot\e_\lambda(k) }{|k|^{1/2}(|k|-\nabla E_{\sigma'}(p)\cdot k)} \Big | \, \le \,  \frac{  C_\delta \, \alpha^{\frac{1}{4}} \, \sigma^{1-\delta} }{ |k|^{\frac32} } \, .
\end{align}
Thus, \eqref{eq:defB(rho)} and \eqref{eq:cc1} yield that
\begin{align}
\big \| B(\rho) \Phi_\sigma(p) \big \| \, &\le \,   C_\delta \, \alpha^{\frac{3}{4}} \, \sigma^{1-\delta} \Big \| \frac{ \mathbf{1}_{ \rho \le |k| \le 1 }( |k| ) }{ |k|^{\frac32} } \Big \|_{ L^2_k } \big \| (N+1)^{\frac12} \Phi_\sigma(p) \big \| \, \notag \\
& \le \,   C_\delta \, \alpha^{\frac{3}{4}} \, \sigma^{1-\delta} \, \ln(\rho^{-1}) \, .
\end{align}
where we used \eqref{eq:boundN} in the last inequality. Together with \eqref{eq:aa2} -- \eqref{eq:aa4}, this concludes the proof of  Corollary~\ref{cor:convergence_GS}.
\end{proof}
The following result follows from \cite{cfp2,frpi} (it is also a consequence of \eqref{eq:boundd2E} in Proposition \ref{prp-En-properties} (2)).
\begin{proposition}\label{prop:C2_Esigma}
There exist $\alpha_c>0$ and $C>0$ such that, for all $0\le\alpha\le\alpha_c$ and $p,p'$ satisfying $|p| \le 1/3$, $|p'|\le1/3$,
\begin{align}
& \big | \nabla E_\sigma(p) - \nabla E_\sigma(p') \big | \, \le \, C \, | p - p' |,
\label{eq:nablaE(P)-nablaE(P')}
\end{align}
uniformly in $\sigma > 0$.
\end{proposition}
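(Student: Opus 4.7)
The plan is to upgrade the pointwise bound \eqref{eq:boundd2E} on $\partial_{|p|}^2 E_\sigma$ to a full Hessian bound by exploiting the rotational invariance of the infrared-regularized fiber Hamiltonian $H_\sigma(p)$, and then to conclude via the fundamental theorem of calculus. I would begin by recalling that, in the Coulomb gauge with rotationally invariant ultraviolet and infrared cutoffs, there is a unitary representation $U(R)$ of $SO(3)$ on Fock space satisfying $U(R)^* H_\sigma(p) U(R) = H_\sigma(R^{-1}p)$, obtained by rotating photon momenta and polarization vectors. This unitary equivalence forces $E_\sigma$ to be a radial function, $E_\sigma(p) = f_\sigma(|p|)$ for some $f_\sigma : [0,1/3] \to \mathbb{R}$ in $C^2$ by Proposition \ref{prp-En-properties}(1).

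Given this radial structure, at any $p \in \mathcal{S} \setminus \{0\}$ one has $\nabla E_\sigma(p) = f_\sigma'(|p|)\,\hat{p}$ with $\hat{p} := p/|p|$, and the Hessian decomposes as
\[
\mathrm{Hess}\, E_\sigma(p) \, = \, f_\sigma''(|p|)\, \hat{p}\hat{p}^{T} \, + \, \frac{f_\sigma'(|p|)}{|p|}\bigl(I - \hat{p}\hat{p}^{T}\bigr),
\]
extending continuously to $p = 0$ with the isotropic value $f_\sigma''(0)\, I$. The radial eigenvalue $f_\sigma''(|p|)$ coincides with $\partial_{|p|}^{2} E_\sigma(p)$ and is bounded above by $1$ thanks to the second inequality in \eqref{eq:boundd2E}. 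For the tangential eigenvalues I would use the first inequality in \eqref{eq:boundd2E}, which, combined with $\nabla E_\sigma(p) = f_\sigma'(|p|)\hat{p}$ and $p = |p|\hat{p}$, gives $|f_\sigma'(|p|) - |p|| \le c\alpha |p|$ and therefore $f_\sigma'(|p|)/|p| \le 1 + c\alpha$, uniformly in $\sigma$. Collecting these two estimates yields an operator-norm bound $\| \mathrm{Hess}\, E_\sigma(p) \| \le 1 + c\alpha$ for all $p \in \mathcal{S}$ and $\sigma > 0$.

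To conclude, since the closed ball $\mathcal{S}$ is convex, the straight segment from $p'$ to $p$ lies entirely inside $\mathcal{S}$, and the fundamental theorem of calculus yields
\[
\nabla E_\sigma(p) - \nabla E_\sigma(p') \, = \, \int_{0}^{1} \mathrm{Hess}\, E_\sigma\bigl(p' + t(p-p')\bigr)\,(p - p')\, dt.
\]
Taking norms and invoking the uniform Hessian bound gives $|\nabla E_\sigma(p) - \nabla E_\sigma(p')| \le C |p - p'|$ with $C = 1 + c\alpha$ independent of $\sigma > 0$, provided $\alpha$ is sufficiently small. The main conceptual step is the reduction from radial $C^2$ control to full Hessian control via rotational invariance; the potentially delicate technical point is the behavior of $f_\sigma'(|p|)/|p|$ as $|p| \to 0$, but the $C^2$ regularity of $f_\sigma$ together with $f_\sigma'(0) = 0$ (which follows from radial smoothness) ensures that this ratio extends continuously to $f_\sigma''(0)$, so the Hessian bound is uniform throughout $\mathcal{S}$.
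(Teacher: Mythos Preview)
Your argument is correct and is precisely the route the paper has in mind: the paper does not give a proof of Proposition~\ref{prop:C2_Esigma} but simply remarks that it ``is also a consequence of \eqref{eq:boundd2E} in Proposition~\ref{prp-En-properties}~(2)'' (in addition to citing \cite{cfp2,frpi}), and your proof spells out exactly how that consequence is obtained---via rotational invariance, the radial Hessian decomposition, and the mean-value theorem on the convex ball $\cS$.
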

We now prove the following proposition.
\begin{proposition}\label{prop:Holder}
Let $0 < \rho \leq 1$. For all $\delta > 0$, there exist $\alpha_\delta>0$ and $C_\delta < \infty$ such that, for all $0\le\alpha\le\alpha_\delta$, $\sigma>0$ and $p,k \in \mathbb{R}^3$ satisfying $|p|\le1/3$, $|p+k|\le1/3$,
\begin{align}
\| \Phi_\sigma^\rho(p+k) - \Phi_\sigma^\rho(p) \| \, \le \, C_\delta \, (1+ \alpha^{\frac12} \ln(\rho^{-1})) \, |k|^{\frac{2}{3}-\delta} .
\end{align}
\end{proposition}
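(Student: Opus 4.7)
The plan is to reduce Proposition~\ref{prop:Holder} to the case $\rho=1$, namely to the Hölder bound $\|\Phi_\sigma(p+k)-\Phi_\sigma(p)\|\le C_\delta|k|^{2/3-\delta}$, and then to establish the latter by a multi-scale interpolation between the infrared-cutoff approximation of Proposition~\ref{prop:convergence_GS} and a sharp single-scale Lipschitz estimate.

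\emph{Step 1: Reduction to $\rho=1$.} Using \eqref{eq:dd3}, I would write
\begin{align*}
\Phi_\sigma^\rho(p+k)-\Phi_\sigma^\rho(p) \,=\, \bigl[(W_{\nabla E_\sigma(p+k)}^{\rho,1})^{*}-(W_{\nabla E_\sigma(p)}^{\rho,1})^{*}\bigr]\Phi_\sigma(p+k) \,+\, (W_{\nabla E_\sigma(p)}^{\rho,1})^{*}\bigl[\Phi_\sigma(p+k)-\Phi_\sigma(p)\bigr].
\end{align*}
The second summand has the same norm as $\Phi_\sigma(p+k)-\Phi_\sigma(p)$ by unitarity of $W_{\nabla E_\sigma(p)}^{\rho,1}$. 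The first summand is controlled by repeating the argument from \eqref{eq:defB(rho)}--\eqref{eq:cc1} in the proof of Corollary~\ref{cor:convergence_GS}, except that the role of $|\nabla E_\sigma(p)-\nabla E_{\sigma'}(p)|$ is now played by $|\nabla E_\sigma(p+k)-\nabla E_\sigma(p)|\le C|k|$ from Proposition~\ref{prop:C2_Esigma}. Combined with the uniform $N$-bound \eqref{eq:boundN}, this produces a contribution of order $\alpha^{1/2}|k|\ln(\rho^{-1})$, which is subsumed by the right-hand side of the proposition.

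\emph{Step 2: Telescoping in the infrared cutoff.} For a scale $\tau\in[\sigma,1]$ to be optimized below, I would split
\begin{align*}
\Phi_\sigma(p+k)-\Phi_\sigma(p) \,=\, \bigl[\Phi_\sigma(p+k)-\Phi_\tau(p+k)\bigr]+\bigl[\Phi_\tau(p+k)-\Phi_\tau(p)\bigr]+\bigl[\Phi_\tau(p)-\Phi_\sigma(p)\bigr].
\end{align*}
The first and third differences are bounded by $C_\delta\alpha^{1/4}\tau^{1-\delta}$ directly via Proposition~\ref{prop:convergence_GS}. The middle, single-scale piece is treated by perturbation theory for the rank-one spectral projection $\widetilde\Pi_\tau(p)$ onto the ground state of $\widetilde K_\tau(p)$ on $\Fo_\tau$. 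Here I would differentiate the eigenvalue equation $\widetilde K_\tau(p)\widetilde\Phi_\tau(p)=E_\tau(p)\widetilde\Phi_\tau(p)$ along the segment $[p,p+k]$ and represent the resulting derivative through the reduced resolvent $(\widetilde K_\tau(p)-E_\tau(p))_{\perp}^{-1}$. Using the uniform gap $\eta\tau$ above $E_\tau(p)$ from \cite{BFP,cfp2,frpi} together with the lower bound \eqref{eq:|PiOmega|} on $\|\widetilde\Pi_\tau(p)\Omega_\tau\|$, the middle difference reduces to the key estimate
\begin{align*}
\|(\partial_p \widetilde K_\tau(p)-\partial_p E_\tau(p))\,\widetilde\Phi_\tau(p)\| \,\le\, C\alpha^{1/2}\tau^{1/2},
\end{align*}
which, upon division by the gap and integration along $[p,p+k]$, yields $\|\Phi_\tau(p+k)-\Phi_\tau(p)\|\le C\alpha^{1/2}|k|\tau^{-1/2}$. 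Summing the three pieces gives a bound of order $\tau^{1-\delta}+|k|\tau^{-1/2}$, optimized by $\tau\sim|k|^{2/(3-2\delta)}$ to produce $|k|^{2/3-\delta'}$ for any $\delta'>0$.

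\emph{Main obstacle.} The heart of the proof lies in the $\tau^{1/2}$ gain in the displayed estimate above, which is precisely what raises the exponent from the naive $1/2$ (obtained by balancing $\tau^{1-\delta}$ against $|k|/\tau$) to the sharp value $2/3$. The Bogoliubov transformation $W_{\nabla E_\tau(p)}^{\tau,1}$ is engineered to cancel the marginally infrared-singular linear coupling between the electron velocity $p-P_f$ and the soft photon modes of $A_\tau$ near the mass shell. As a consequence, after conjugation, $\partial_p\widetilde K_\tau(p)-\partial_p E_\tau(p)$ applied to the ground state $\widetilde\Phi_\tau(p)$ produces a vector whose effective photon form factor vanishes linearly in $|k|$ rather than behaving like $|k|^{-1/2}$. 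The $L^2$-restriction to $|k|\ge\tau$, combined with Lemma~\ref{lm:b1}-type bounds and the $N$-bound \eqref{eq:boundN}, then supplies the factor $\tau^{1/2}$. Controlling the additional $p$-dependence of $\nabla E_\tau(p)$ inside $W_{\nabla E_\tau(p)}^{\tau,1}$ via Proposition~\ref{prop:C2_Esigma} closes the argument.
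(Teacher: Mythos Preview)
Your Step~1 (reduction to $\rho=1$ via the Weyl operator difference) and the overall telescoping strategy in Step~2 are fine and match the paper's architecture. The problem is the single-scale Lipschitz bound at the heart of Step~2: the estimate
\[
\bigl\|(\partial_p\widetilde K_\tau(p)-\partial_p E_\tau(p))\,\widetilde\Phi_\tau(p)\bigr\|\ \le\ C\,\alpha^{1/2}\,\tau^{1/2}
\]
is not correct. The main contribution to the left side is $W(\nabla_p H_\tau(p)-\nabla E_\tau(p))\Psi_\tau(p)$, whose norm squared equals, by Feynman--Hellman and $(\nabla_p H_\tau(p))^2=2(H_\tau(p)-H_f)$,
\[
2E_\tau(p)-2\langle\Psi_\tau(p),H_f\Psi_\tau(p)\rangle-|\nabla E_\tau(p)|^2 \ = \ O(\alpha),
\]
uniformly in $\tau$; no factor $\tau^{1/2}$ appears. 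Your heuristic that the Bogoliubov transformation makes the ``effective form factor vanish linearly in $|k|$'' does not produce a gain in $\tau$: a form factor $\sim|k|$ on $\tau\le|k|\le1$ has $L^2$-norm $O(1)$, not $O(\tau^{1/2})$. With the correct $O(\alpha^{1/2})$ bound, your reduced-resolvent argument costs a full power of the gap and gives $\|\Phi_\tau(p+k)-\Phi_\tau(p)\|\le C|k|\tau^{-1}$; balancing $\tau^{1-\delta}$ against $|k|\tau^{-1}$ yields only H\"older exponent $\tfrac12$, not $\tfrac23$.

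The paper obtains $|k|\,\tau^{-1/2}$ by a different mechanism that spends only \emph{half} a power of the gap. Instead of the reduced resolvent, it uses
\[
\|\widetilde\Pi_\tau^\perp(p+k)\widetilde\Phi_\tau(p)\| \ \le \ (\eta\tau)^{-1/2}\,\bigl\|(\widetilde K_\tau(p+k)-E_\tau(p+k))^{1/2}\widetilde\Phi_\tau(p)\bigr\|,
\]
and then computes the \emph{expectation value} $\langle\widetilde\Phi_\tau(p),(\widetilde K_\tau(p+k)-E_\tau(p+k))\widetilde\Phi_\tau(p)\rangle$ exactly, using the quadratic dependence of $H_\tau$ on $p$ and Feynman--Hellman, to get $O(k^2)$ (this is the computation \eqref{eq:nablaK-nablaE_j_2}). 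The square root then gives the crucial $|k|$, and only $\tau^{-1/2}$ is paid. This second-order cancellation in the \emph{energy} expectation, not an improved operator bound on $\partial_p\widetilde K_\tau$, is what lifts the exponent from $\tfrac12$ to $\tfrac23$.
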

\noindent \textbf{Proof.} \\ 
\textit{Step 1}. ~ We first prove that, for all $0 < \sigma < \rho \le 1$,
\begin{align}
\| \Phi_\sigma^\rho(p+k) - \Phi_\sigma^\rho(p) \| \, \le \, C \, |k| \, ( \sigma^{-\frac{1}{2}} + \alpha^{\frac12}Ê\ln ( \rho^{-1} ) )  \, .  \label{eq:Holder1}
\end{align}
We decompose
\begin{align}
\Phi_\sigma^\rho(p+k) - \Phi_\sigma^\rho(p) \, &= \,  \big ( W_{\nabla E_\sigma(p+k)}^{\rho,1} \big )^* \,\Phi_\sigma(p+k) \, - \, \big ( W_{\nabla E_\sigma(p)}^{\rho,1} \big )^* \,\Phi_\sigma(p) \notag \\
& = \,  \Big ( \big ( W_{\nabla E_\sigma(p+k)}^{\rho,1} \big )^* - \big ( W_{\nabla E_\sigma(p)}^{\rho,1} \big )^* \Big ) \,\Phi_\sigma(p) \notag \\
&\quad + \, \big ( W_{\nabla E_\sigma(p+k)}^{\rho,1} \big )^* \, \big ( \Phi_\sigma(p+k) - \Phi_\sigma(p) \big ) \, . \label{eq:bb2}
\end{align}
To estimate the first term in the right side of \eqref{eq:bb2}, we proceed as in the proof of Corollary \ref{cor:convergence_GS}. Namely, we have that
\begin{align}
 \Big \|Ê\Big ( \big ( W_{\nabla E_\sigma(p+k)}^{\rho,1} \big )^* - \big ( W_{\nabla E_{\sigma}(p)}^{\rho,1} \big )^* \Big )  \Phi_\sigma(p) \Big \| & = \Big \|Ê\Big ( \mathbf{1} - W_{\nabla E_\sigma(p+k)}^{\rho,1} \big ( W_{\nabla E_{\sigma}(p)}^{\rho,1} \big )^* \Big )  \Phi_\sigma(p) \Big \| \notag \\
& \le \, \big \| C(\rho) \Phi_\sigma(p) \big \| \, , \label{eq:bb3}
\end{align}
by the spectral theorem, where
\begin{align}
C(\rho) := \alpha^{\frac12}\sum_\lambda\int_{\rho \le |\tilde k| \le 1}
	d \tilde k \,
	\Big ( \frac{\nabla E_\sigma(p+k)\cdot\e_\lambda(\tilde k) \, b_\lambda(\tilde k) -h.c.}{|\tilde k|^{1/2}(|\tilde k|-\nabla E_\sigma(p+k)\cdot \tilde k)} - \frac{\nabla E_{\sigma}(p)\cdot\e_\lambda(\tilde k) \, b_\lambda(\tilde k) -h.c.}{|\tilde k|^{1/2}(|\tilde k|-\nabla E_{\sigma}(p)\cdot \tilde k)} \Big ) \, . \notag
\end{align}
Using Proposition \ref{prop:C2_Esigma}, one verifies that
\begin{align}
\Big | \frac{\nabla E_\sigma(p+k)\cdot\e_\lambda(\tilde k)}{|\tilde k|^{1/2}(|\tilde k|-\nabla E_\sigma(p+k)\cdot \tilde k)} - \frac{\nabla E_{\sigma}(p)\cdot\e_\lambda(\tilde k) }{|\tilde k|^{1/2}(|\tilde k|-\nabla E_{\sigma}(p)\cdot \tilde k)} \Big | \, \le \,  \frac{ C \, |k| }{ |\tilde k|^{\frac32} } \, .
\end{align}
Hence \eqref{eq:cc1} implies that
\begin{align}
\big \| C(\rho) \Phi_\sigma(p) \big \| \, &\le \, C \, \alpha^{\frac12} \, |k| \, \Big \| \frac{ \mathbf{1}_{ \rho \le | \tilde k | \le 1 }( \tilde k ) }{ |\tilde k|^{\frac32} } \Big \|_{ L^2_{\tilde k} } \big \| (N+1)^{\frac12} \Phi_\sigma(p) \big \| \, \notag \\
& \le \, C \, \alpha^{\frac12} \, |k| \, \ln(\rho^{-1}) \, , \label{eq:bb4}
\end{align}
where we used \eqref{eq:boundN} in the last inequality. Equations \eqref{eq:bb3} and \eqref{eq:bb4} yield
\begin{align}
 \Big \|Ê\Big ( \big ( W_{\nabla E_\sigma(p+k)}^{\rho,1} \big )^* - \big ( W_{\nabla E_{\sigma}(p)}^{\rho,1} \big )^* \Big )  \Phi_\sigma(p) \Big \| & \le C \, \alpha^{\frac12}\,  |k| \, 
\ln(\rho^{-1}). 
\end{align}

It remains to estimate the second term in the right side of \eqref{eq:bb2}. By unitarity of $W_{\nabla E_\sigma(p+k)}^{\rho,1}$, it suffices to estimate $\| \Phi_\sigma(p+k) - \Phi_\sigma(p) \|$.
Using \eqref{eq:|PiOmega|} and the relation
 \begin{align}
\big \| (\widetilde \Pi_\sigma(p)  - \widetilde \Pi_\sigma(p+k) )\varphi \big \|^2&=\langle  \varphi , (\widetilde \Pi_\sigma(p+k) + \widetilde \Pi_\sigma(p) - \widetilde \Pi_\sigma(p) \widetilde \Pi_\sigma(p+k) - \widetilde \Pi_\sigma(p+k) \widetilde \Pi_\sigma(p) ) \varphi \rangle  \notag \\
&=\langle  \varphi , (\widetilde{\Pi}^\perp_\sigma(p+k)  \widetilde \Pi_\sigma(p) + \widetilde{\Pi}^\perp_\sigma(p) \widetilde \Pi_\sigma(p+k) ) \varphi \rangle , \notag \\
&=\langle  \varphi , ( \widetilde \Pi_\sigma(p) \widetilde{\Pi}^\perp_\sigma(p+k)  \widetilde \Pi_\sigma(p) + \widetilde{\Pi}^\perp_\sigma(p) \widetilde \Pi_\sigma(p+k) \widetilde{\Pi}^\perp_\sigma(p) ) \varphi \rangle , \notag \\
&= \| \widetilde{\Pi}^\perp_\sigma(p+k)  \widetilde \Pi_\sigma(p) \varphi \|^2 + \| \widetilde \Pi_\sigma(p+k) \widetilde{\Pi}^\perp_\sigma(p) \varphi \|^2 , \notag 
\end{align}
for any $\varphi \in \Fo_\sigma$, where $\widetilde{\Pi}^\perp_\sigma(p) := I - \widetilde \Pi_\sigma(p)$,  we obtain that
\begin{align}
\| \Phi_\sigma(p+k) - \Phi_\sigma(p) \| \, &= \, \| \widetilde \Phi_\sigma(p+k) - \widetilde \Phi_\sigma(p) \| \, \notag \\
&\le \, \frac{ 2 }{ \| \widetilde \Pi_\sigma(p) \Omega_\sigma \| }  \big \| (\widetilde \Pi_\sigma(p)  - \widetilde \Pi_\sigma(p+k) )\Omega_\sigma \big \| \notag \\
&\le \, 6 \big \| \widetilde \Pi_\sigma(p)  - \widetilde \Pi_\sigma(p+k) \big \| \notag \\
&\le \, 6 ( \big \| \widetilde{\Pi}^\perp_\sigma( p+k ) \widetilde \Pi_\sigma(p) \| + \| \widetilde{\Pi}^\perp_\sigma( p ) \widetilde \Pi_\sigma(p+k) \big \| ) \notag \\
& \le \, 6 ( \| \widetilde{\Pi}^\perp_\sigma(p+k) \widetilde \Phi_\sigma(p) \| + \| \widetilde{\Pi}^\perp_\sigma(p) \widetilde \Phi_\sigma(p+k) \| ). \label{eq:Holder2}
\end{align}

Since there is an energy gap of size
$\eta \sigma$ above $E_\sigma(p+k)$ in the spectrum of the operator
$\widetilde K_\sigma(p+k)$, we can estimate
$$
\widetilde \Pi_\sigma^\perp(p+k) \; \leq \; 
\frac{1}{\eta \sigma} \, \big( \widetilde K_\sigma(p+k) - E_\sigma(p+k)\big),
$$
and hence
\begin{align}
\| \widetilde{\Pi}^\perp_\sigma(p+k) \widetilde \Phi_\sigma(p) \|
\; \leq \; 
\frac{2}{\eta^{1/2} \sigma^{1/2}} \, 
\big\| (\widetilde K_\sigma(p+k) - E_\sigma(p+k)\big)^{1/2} 
\widetilde \Phi_\sigma(p) \big\| \, . \label{eq:new3}
\end{align}
We have by \eqref{eq:Kn}, \eqref{PhiK}, the definition after \eqref{Fs} and \eqref{Hp} 
\eqn
 \widetilde K_\sigma( p + k )= \widetilde K_\sigma( p  )  +	k \cdot \nabla_p  \widetilde K_\sigma(p)  \, + k^2 / 2,
 \eeqn
 where $\nabla_p \widetilde K_\sigma(p) \, := \, W_{\nablE(p)}^1 \, \nabla_p \Hn(p) \, (W_{\nablE(p)}^1)^*$, with $\nabla_p\Hn(p) \, := \, p \, - \, \Pf \, - \, \alpha^{\frac12}\Af$. 
Using this expansion and  the Feynman-Hellman formula,
\eqn
	\langle \tilde \Phi_\sigma(p) , \nabla_p  \widetilde K_\sigma(p)  \tilde \Phi_\sigma(p) \rangle \, = \, \nabla E_\sigma(p) \, ,
\eeqn
together with the mean-value theorem and Proposition \ref{prop:C2_Esigma}, we have that (see also \cite[Lemma 3.6]{cffs})
\begin{align}
& \big \| (  \widetilde K_\sigma( p + k )  - E_\sigma( p + k ) )^{\frac{1}{2}}  \widetilde \Phi_\sigma(p)  \big \|^2 \notag \\
& = \, \big \langle   \widetilde \Phi_\sigma(p)  , (  \widetilde K_\sigma( p + k )  - E_\sigma( p + k ) )   \widetilde \Phi_\sigma(p)  \big \rangle  \notag \\
& = \, \big \langle   \widetilde \Phi_\sigma(p)  , (  \widetilde K_\sigma( p )  + k \cdot ( \nabla_p  \widetilde K_\sigma( p )  ) + k^2/2 - E_\sigma( p + k ) )   \widetilde \Phi_\sigma(p)  \big \rangle \notag \\
& = \, E_\sigma( p ) - E_\sigma( p + k ) + k \cdot ( \nabla_p E_\sigma(p) ) + k^2 / 2 \notag \\
&  
= \, \tfrac{1}{2} k^2 + \int_0^1 k \cdot \big[ \nabla_p E_\sigma(p) - 
\nabla_p E_\sigma(p + \tau k) \big] \, d\tau
\notag \\
& \le \, C \, k^2. \label{eq:nablaK-nablaE_j_2}
\end{align}
Hence,
\begin{align}
\big \| \big (  \widetilde K_\sigma(p+k)  - E_\sigma(p+k) \big )^{\frac{1}{2}} \tilde \Phi_\sigma(p) \big \| \, \le C \, |k|. \label{eq:Holder4}
\end{align}
Combining \eqref{eq:new3} and \eqref{eq:Holder4}, we obtain that
\begin{align}
\| \widetilde{\Pi}^\perp_\sigma(p+k) \widetilde \Phi_\sigma(p) \| 
 \, \le \, C \, |k| \, \sigma^{-\frac12} \, .
\end{align}
Proceeding in the same way, it follows likewise that
\begin{align}
\| \widetilde{\Pi}^\perp_\sigma(p) \widetilde \Phi_\sigma(p+k) \| 
 \, \le \, C \, |k| \, \sigma^{-\frac12} \, ,
\end{align}
and hence, by \eqref{eq:Holder2}, \eqref{eq:Holder1} follows. \\

\noindent \textit{Step 2}. ~ We now prove that $\| \Phi_\sigma^\rho(p+k) - \Phi_\sigma^\rho(p) \| \, \le \, C_\delta \, (1+ \alpha^{\frac12} \ln(\rho^{-1})) \, |k|^{\frac{2}{3}-\delta}$ (with $C_\delta < \infty$ for $\delta > 0$).

Suppose first that $\sigma \ge |k|^{2/3}$. Then by Step 1, we have that
\begin{align}
\| \Phi_\sigma^\rho(p+k) - \Phi_\sigma^\rho(p) \| \, \le \, C \, |k| \, \big ( |k|^{-\frac{1}{3} } + \alpha^{\frac12} \, \ln( \rho^{-1} ) \big ) \, = \, C \, |k|^{ \frac{2}{3} } + C \, \alpha^{\frac12} \, \ln( \rho^{-1} ) \, |k|.
\end{align}
 Conversely,  assume that $\sigma \, \le \, |k|^{2/3}$. We write
\begin{align}
\| \Phi_\sigma^\rho(p+k) - \Phi_\sigma^\rho(p) \| \, &\le \, \| \Phi_\sigma^\rho(p+k) - \Phi^\rho(p+k) \|+ \| \Phi^\rho( p+k ) - \Phi_{|k|^{2/3}}^\rho(p+k) \| \notag \\
&\quad + \| \Phi_\sigma^\rho(p) - \Phi^\rho(p) \|+ \| \Phi^\rho( p ) - \Phi_{|k|^{2/3}}^\rho(p) \| \notag \\
&\quad + \| \Phi_{|k|^{2/3}}^\rho(p+k) - \Phi_{|k|^{2/3}}^\rho(p) \|.
\end{align}
By Corollary \ref{cor:convergence_GS}, the first two lines are bounded by
\begin{align}
 &\| \Phi_\sigma^\rho(p+k) - \Phi^\rho(p+k) \|+ \| \Phi^\rho( p+k ) - \Phi_{|k|^{2/3}}^\rho(p+k) \| \notag \\
& +  \| \Phi_\sigma^\rho(p) - \Phi^\rho(p) \|+ \| \Phi^\rho( p ) - \Phi_{|k|^{2/3}}^\rho(p) \| \notag \\
 & \le \, C_\delta \, \alpha^{\frac14} \, \big (1+ \alpha^{\frac12} \, \ln(\rho^{-1}) \big ) \, |k|^{\frac23(1-\delta)} \, ,
\end{align}
whereas by Step 1, the last term is bounded by $C \, |k|^{ \frac{2}{3} } + C \, \alpha^{\frac12} \, \ln( \rho^{-1} ) \, |k|$. Setting $\delta' = 2\delta/3$ and changing notations concludes the proof of the proposition. \qed
 $\;$ \\



\appendix

\section{Proof of Estimate \eqref{HK-est}}
\label{ssec-Dop2-2}
In this Appendix, we prove \eqref{HK-est}.
It asserts that
\eqn \label{HK-est2}
	\|(\Kn^\rho(p) \, - \, \Hn(p))\Phsig^\rho(p)\|_\Fo
	\, \leq \, C \, \alpha^{\frac12} \, \rho^{\frac12} \, |p| \, ,
\eeqn
for all $p\in\cS$, for a constant  $C < \infty$ independent of $\alpha$, $\sigma$, and $\rho$,  where $0 < \sigma < \rho \le 1$.

To begin with, let
\eqn
	v_\lambda^\sharp(k) \, := \,
	\alpha^{\frac12} \, \mathbf{1}_{\sigma \le |k| \le \rho }(|k|) \,
	\frac{\nablE(p)\cdot\e_\lambda^\sharp(k)}{|k|^{1/2}(|k|-\nablE(p)\cdot k)} \,,
\eeqn
(scalar-valued) and
\eqn
	w_\lambda^\sharp(k) \, := \,
	\alpha^{\frac12} \, \mathbf{1}_{\sigma \le |k| \le 1 }(|k|) \,
	\frac{ \e_\lambda^\sharp(k)}{|k|^{1/2} }
\eeqn
(vector-valued).
We note that
\eqn
	|v_\lambda(k)| \, \leq \, C \, \alpha^{\frac 12} \, |p| \, 
	\frac{ \mathbf{1}_{\sigma \le |k| \le \rho }(|k|) }{|k|^{\frac32}} 
\label{eq:f1}
\eeqn
and
\eqn
	|w_\lambda(k)| \, \leq C \, \alpha^{\frac 12} \, \frac{ \mathbf{1}_{\sigma \le |k| \le 1 }(|k|) }{|k|^{\frac12}}
\eeqn
where we have used that $|\nablE(p)| \, \leq \, C \, |p|$, uniformly in the infrared cutoff $0 \leq \sigma \leq 1$.

Using that
\eqn
	 W_{\nablE(p)}^{\s,\rho} \, b_\lambda^\sharp(k) \, ( W_{\nablE(p)}^{\s,\rho} )^* \, = \,
	 b_\lambda^\sharp(k) \, + \, v_\lambda^\sharp(k) \,,
\eeqn
a straightforward calculation yields
\eqn
	\lefteqn{
	\Kn^\rho(p) \, - \, \Hn(p)
	}
	\nonumber\\
	 & = & W_{\nablE(p)}^{\s,\rho} \, \Hn(p) \, ( W_{\nablE(p)}^{\s,\rho} )^* \, - \, \Hn(p)
	\nonumber\\
	&=&  2  V(p) \cdot (\nabla_p\Hn(p)) 
           \, + \, V^2(p) \, + \, Y(p) \,,
\eeqn
where 
\eqn
	\nabla_p\Hn(p) \, = \, p \, - \, \Pf \, - \, \alpha^{\frac12}\Af \,, \label{eq:defnablaH}
\eeqn
with
\eqn\label{eq:defAsigma}
	\Af \, = \, \sum_\lambda\big(b_\lambda(w_\lambda)+b_\lambda^*(w_\lambda)\big) \,,
\eeqn
and
\eqn
	V(p) &:=& \sum_\lambda\Big[ \, b_\lambda(k v_\lambda)
	\, + \, b_\lambda^*(k v_\lambda) \, + \, 2Re (w_\lambda,v_\lambda)
	\, + \, (v_\lambda, k v_\lambda) \, \Big] \, ,
\eeqn
(vector-valued operator) and
\eqn
	Y(p) &:=& \sum_\lambda\Big[ \, 
      b_\lambda\big(  (k^2 + |k|)  v_\lambda\big) 
      \, + \, 
      b_\lambda^*\big(  (k^2 + |k|)  v_\lambda\big) 
      \, + \, (v_\lambda, |k| v_\lambda) \, 
       \, + \, 2Re (k \cdot w_\lambda, v_\lambda)\Big]  \, ,
\eeqn
(scalar-valued operator).  Note 
that both $V(p)$ and $Y(p)$ are proportional to $|\nablE(p)|$
since all terms are of first or higher order in $v_\lambda$ (which is proportional to $|\nablE(p)| \le C|p|$).
 
Using Lemma~\ref{lm:b1} and \eqref{eq:f1}, we observe that
\eqn \label{eq-vp}
\| V(p) (H_f + 1)^{-1/2} \| 
& \leq &
2 \big\| (|k| + |k|^2)^{\frac12} v_\lambda \big\|_{L^2}
+ \big\| |k|^{\frac12} v_\lambda \big\|_{L^2}^2 
+ \big\| w_\lambda  \, v_\lambda \big\|_{L^1}
\nonumber \\
& \leq &
C \, \alpha^{1/2} \, |p| \, \rho^{1/2} \, ,  
\eeqn	
and similarly
\eqn \label{eq-vp2}
\| V(p)^2 (H_f + 1)^{-1} \| 
& \leq & 
C \, \alpha \, |p|^2 \, \rho \, ,  
\\ \label{eq-yp}
\| Y(p) (H_f + 1)^{-1/2} \| 
& \leq &
C \, \alpha^{1/2} \, |p| \, \rho \, .
\eeqn	
Next we note that for any normalized vector $\Phi \in D(H(p))$, 
we have the estimate
\eqn
\big\| (\tfrac{1}{2}(p-P_f)^2 + H_f + 1) \, \Phi \big\| 
& \leq &
\big\| (H_\sigma(p) + 1) \, \Phi \big\| 
+ \alpha^{1/2} \big\| A_\sigma \cdot \nabla H_\sigma(p) \, \Phi \big\| 
+ \alpha \big\| A_\sigma^2 \, \Phi \big\| 
\nonumber \\[1ex]
& \leq &
\big\| (H_\sigma(p) + 1) \, \Phi \big\| 
+ C \, \alpha^{1/2} 
\big\| (H_f + 1)^{1/2} \nabla H_\sigma(p) \, \Phi \big\| 
\nonumber \\
&  &
+ C \, \alpha \big\| (H_f + 1) \, \Phi \big\| \, .
\eeqn	
Since furthermore $P_f$ and $H_f$ commute, we have that 
$(H_f + 1)^2 \leq (\tfrac{1}{2}(p-P_f)^2 + H_f + 1)^2$ and hence
\eqn \label{eq-hf-hp}
\big\| (H_f + 1) \, \Phi \big\| 
& \leq &
2 \big\| (H_\sigma(p) + 1) \, \Phi \big\| 
+ C \, \alpha^{1/2} 
\big\| (H_f + 1)^{1/2} \nabla H_\sigma(p) \, \Phi \big\| \, ,
\eeqn	
provided $\alpha >0$ is sufficiently small.
Now, we observe that
\eqn
\big\| [ H_f \, , \, \nabla H_\sigma(p) ] \, (H_f+1)^{-1/2} \big\|
& = &
\alpha^{1/2} \big\| [ H_f \, , \, A_\sigma ] \, (H_f+1)^{-1/2} \big\|
\ \leq \ 
C \, \alpha^{1/2} ,
\eeqn	
which implies that
\eqn
\| (H_f + 1)^{1/2} \nabla H_\sigma(p) \, \Phi \|^2
& = &
\big \langle  \nabla H_\sigma(p) \Phi \cdot \, , \: 
(H_f + 1) \, \nabla H_\sigma(p) \, \Phi \big\rangle
\\ \nonumber 
& = &
\big\langle  \nabla H_\sigma(p)^2 \Phi \, , \: 
(H_f + 1) \, \Phi \big\rangle
- 
\big\langle  [ H_f \, , \, \nabla H_\sigma(p) ] \, \Phi \cdot \, , \: 
\nabla H_\sigma(p) \, \Phi \big\rangle
\\ \nonumber 
& \leq &
C \big\|  H_\sigma(p) \, \Phi \big  \| \, \: 
\big \| (H_f + 1) \, \Phi \big\|
+ C \, \alpha^{1/2} \| (H_f + 1)^{1/2} \nabla H_\sigma(p) \, \Phi \| \, .
\eeqn	
Hence, for sufficiently small $\alpha >0$, we have that
\eqn \label{eq-hf-hp2,5}
\| (H_f + 1)^{1/2} \nabla H_\sigma(p) \, \Phi \|
& \leq &
C \, \| H_\sigma(p) \, \Phi \|^{1/2} \, \| (H_f + 1) \, \Phi \|^{1/2} \, .
\eeqn	
Inserting this estimate into \eqref{eq-hf-hp}, we obtain
for all normalized $\Phi$ that
\eqn \label{eq-hf-hp2}
\big\| (H_f + 1) \, \Phi \big\| 
& \leq &
C \big\| (H_\sigma(p) + 1) \, \Phi \big\| \, ,
\eeqn	
and, additionally using \eqref{eq-hf-hp2,5}, that
\eqn \label{eq-hf-hp3}
\| (H_f + 1)^{1/2} \nabla H_\sigma(p) \, \Phi \|
& \leq &
C \big\| (H_\sigma(p) + 1) \, \Phi \big\| \, ,
\eeqn	
provided $\alpha >0$ is sufficiently small.

We arrive at the assertion by applying Estimates~\eqref{eq-vp},
\eqref{eq-vp2}, \eqref{eq-yp}, \eqref{eq-hf-hp2}, and \eqref{eq-hf-hp3},
\eqn \label{eq-est-1}
\big\| \big( \Kn^\rho(p) - \Hn(p) \big) \, \Phi_\sigma^\rho(p) \big\|
& \leq & 
2 \| V(p) \cdot \nabla_p\Hn(p) \, \Phi_\sigma^\rho(p) \|
+ \| V(p)^2 \, \Phi_\sigma^\rho(p) \|
+ \| Y(p) \, \Phi_\sigma^\rho(p) \|
\nonumber \\[1ex]
& \leq & 
2 \| V(p) (H_f + 1)^{-1/2} \| \; 
\| (H_f + 1)^{1/2} \nabla_p\Hn(p) \, \Phi_\sigma^\rho(p) \|
\nonumber \\
&  & 
\; + \; 
\| V(p)^2  (H_f + 1)^{-1} \| \; 
\| (H_f + 1) \, \Phi_\sigma^\rho(p) \|
\nonumber \\
&  & 
\; + \; 
\| Y(p) (H_f + 1)^{-1/2} \| \; 
\| (H_f + 1) \, \Phi_\sigma^\rho(p) \|
\nonumber \\[1ex]
& \leq & 
C \, \alpha^{1/2} \, |p| \, \rho^{1/2} \,
\big\| (H_\sigma(p) + 1) \, \Phi_\sigma^\rho(p) \big\| \, 
\nonumber \\[1ex]
& \leq & 
C' \, \alpha^{1/2} \, |p| \, \rho^{1/2} \, ,
\eeqn	
which is Inequality~\eqref{HK-est2} or \eqref{HK-est}, respectively.
\qed
\\

\parskip = 0 pt
\parindent = 0 pt

\end{document}